\tikzstyle{printersafe}=[snake=snake,segment amplitude=0 pt]
\newcommand{\floor}[1]{\left\lfloor #1 \right\rfloor}
\newcommand{\ceil}[1]{\left\lceil #1 \right\rceil}
\newtheorem{theorem}{Theorem}
\newtheorem{proposition}{Proposition}
\newtheorem{definition}{Definition}
\newtheorem{lemma}{Lemma}
\newenvironment{proof}{ {\bf Proof:}} 
\journal{European Journal of Operational Research}
\begin{document}

\begin{frontmatter}

\title{Total Coloring and Total Matching: Polyhedra and Facets}

\author{Luca Ferrarini}
\ead{l.ferrarini3@campus.unimib.it}

\author{Stefano Gualandi}
\ead{stefano.gualandi@unipv.it}

\address{Università di Pavia, Dipartimento di Matematica ``F. Casorati''}



\begin{abstract}
A total coloring of a graph $G=(V,E)$ is an assignment of colors to vertices and edges such that neither two adjacent vertices nor two incident edges get the same color, and, for each edge, the end-points and the edge itself receive different colors.
Any valid total coloring induces a partition of the elements of $G$ into total matchings, which are defined as subsets of vertices and edges that can take the same color.
In this paper, we propose Integer Linear Programming models for both the Total Coloring and the Total Matching problems, and we study the strength of the corresponding Linear Programming relaxations.
The total coloring is formulated as the problem of finding the minimum number of total matchings that cover all the graph elements.
%
This covering formulation can be solved by a Column Generation algorithm, where the pricing subproblem corresponds to the Weighted Total Matching Problem. 
Hence, we study the Total Matching Polytope.
We introduce three families of nontrivial valid inequalities: vertex-clique inequalities based on standard clique inequalities of the Stable Set Polytope, congruent-$2k3$ cycle inequalities based on the parity of the vertex set induced by the cycle, and even-clique inequalities induced by complete subgraphs of even order. 
We prove that congruent-$2k3$ cycle inequalities are facet-defining only when $k=4$, while the vertex-clique and even-cliques 
are always facet-defining.
Finally, we present preliminary computational results of a Column Generation algorithm for the Total Coloring Problem and a Cutting Plane algorithm for the Total Matching Problem.
%
%
%

%

%
%
\end{abstract}

\begin{keyword}
Integer Programming \sep Combinatorial Optimization  \sep Total Coloring \sep Total Matching  
\end{keyword}

\end{frontmatter}



\section{Introduction}\label{sec:intro}
Let us consider a simple and undirected graph $G=(V,E)$ and let $D = V \cup E$ be the set of its elements.
We say that a pair of elements $a,b \in D$ are {\it adjacent} if $a$ and $b$ are adjacent vertices, or if they are incident edges, or if $a$ is an edge incident to a vertex $b$.
If two elements $a,b \in D$ are not adjacent, they are {\it independent}.
A {\it matching} is a subset of edges $M \subseteq E$ such that $e \cap f = \emptyset$ for all $e,f \in M$ with $e \neq f$. 
A matching is called {\it perfect} if it covers all vertices, that is, has size $\frac{1}{2}|V|$. We define $\nu(G) := \max \{ |M| : M \mbox{ is a matching} \}$.
Given a set of colors $K=\{1,\dots,k\}$, a $k$--total coloring of $G$ is an assignment $\phi : D \rightarrow K$ such that $\phi(a) \neq \phi(b)$ for every pair of adjacent elements $a,b \in D$.
Each subset of elements assigned to the same color by $\phi$ defines a \textit{total matching}, that is, a subset $T \subseteq D$ where the elements are pairwise independent.
Hence, a $k$-total coloring induces a partition of the elements in $D$ into $k$ disjoint total matchings.
The minimum value of $k$ such that $G$ admits a $k$-total coloring is called the total chromatic number, and it is denoted by $\chi_{T}(G)$.
A total matching of maximum cardinality is denoted by $\nu_T(G)$.

The Total Coloring Problem (TCP) consists of finding $\chi_T(G)$. 
It is an NP-hard problem \cite{sanchez1989}, which is studied mainly in graph theory \cite{Yap2006} for the conjecture attributed independently to Vizing \cite{Vizing1964} and Behzad \cite{Behzad1965} that relates $\chi_{T}(G)$ to the maximum degree $\Delta(G)$ of the nodes in $G$. 
The conjecture states that $\chi_T(G) \leq \Delta(G)+2$.
Note that $\Delta(G)+1$ is a trivial lower bound on $\chi_T(G)$. 
Hence, if the conjecture were true, we would be left with the NP-Complete problem of deciding whether $\chi_T(G) = \Delta(G)+1$.
While the conjecture is valid for specific classes of graphs (e.g., see \cite{Vignesh2018}), the conjecture is still open for general graphs\footnote{When submitting this paper, we have found a paper on arxiv claiming a proof for Vizing's conjecture \cite{Murthy2020}.}.
In particular, Vizing's conjecture was proved for cubic graphs, and hence, the total chromatic number of a cubic graph is either 4 or 5.
%
%
In \cite{Brinkmann2015}, the authors raised the question of whether a cubic graph of Type 2 (i.e., with $\chi_T(G) = \Delta(G)+2$) and with girth greater than 4 exists, where the girth is defined as the length of the smallest cycle in the graph.
%
%
Up to now, the question is still open.

The TCP generalizes both the Vertex Coloring Problem, where we have to color only the vertices of $G$, and the Edge Coloring Problem, where instead we have to color only the edges. 
The Vertex Coloring Problem belongs to the list of 21 NP-hard problems introduced by E. Karp in \cite{karp1975}, and it was tackled in the literature by many exact polyhedral approaches (for a recent survey, see \cite{malaguti2010}). 
The most efficient exact approaches to the Vertex Coloring Problem are based on set covering formulations \cite{Mehrotra1996,gualandi2012,malaguti2011,held2012}, where every single set of the covering represents a subset of vertices taking the same color, corresponding hence to a (maximal) stable set of $G$.
Similarly, the best polyhedral approach to the Edge Coloring Problem is based on a set covering formulation, where the edges are covered by (maximal) matchings of $G$ \cite{Nemhauser1991,Lee1993}. 
An interesting alternative formulation for the Edge Coloring Problem is based on a different ILP model based on a binary encoding of the problem variables \cite{Lee2005}. 
While in the literature there are other interesting approaches to graph coloring problems (e.g., branch-and-cut \cite{Mendez2006}, semidefinite programming \cite{Karger1998}, decision diagrams \cite{VanHoeve2020}, constraint satisfiability \cite{Hebrard2020}, memetic algorithms \cite{lu2010}), and other interesting types of coloring problems (e.g., equitable coloring \cite{Lih1998,Mazz}, graph multicoloring \cite{Gualandi2012b}, sum coloring \cite{DelleDonne2020}, selective graph coloring \cite{demange2015}), in this work, we focus on a polyhedral approach to the TCP.

A problem related to TCP is the Total Matching Problem (TMP), where we look for a subset of the elements of $G$ which yield an independent set.
The TMP generalizes both the Matching Problem, where we look for an independent set of edges \cite{Edmonds1965}, and the Stable Set Problem, where instead we look for an independent set of vertices \cite{Rossi2001,Rebennack2011}.
The first work on the TMP appeared in \cite{TotalMatching}, where the authors derive lower and upper bounds on the size of a maximum total matching.
In \cite{Manlove}, D.F. Manlove provides a survey of the algorithmic complexities of the decision problems related to graph parameters.
The author reports that $\nu_{T}(G)$ is NP-complete for bipartite, chordal and arbitrary graphs.
Despite the strong connection with the Matching Problem, the TMP is less studied in the operations research literature.
In particular, significant results are obtained only for structured graphs, as  cycles, paths, full binary trees, hypercubes, and complete graphs, see \cite{Leidner2012}. 
This work presents a first polyhedral study of the TMP deriving several facet-defining inequalities for its polytope.

The TCP has several practical applications, for instance, in Match Scheduling \cite{SportScheduling}, Network Task Efficiency, and Math Art \cite{Leidner2012}.
As an example of match scheduling, consider the martial art tournament problem, which can be modeled using a {\it tournament graph} $G=(V,E)$ and a set of colors $K$ defined as follows.
We add a vertex $i$ to $V$ for each player, and an edge $\{i,j\}$ to $E$ for each match.
Then, for each time period of the tournament, we introduce a color in $K$.
The assignment of a color $k\in K$ to an edge $\{i,j\}$ represents the scheduled time period of the match between players $i$ and $j$.
The assignment of a color $k \in K$ to a vertex $i$ represents a rest time for player $i$ during the time period associated with color $k$.
Given this graph formulation, no pair of incident edges get the same color because no player can be in two matches at once;
no vertex can be incident to an edge with the same color as the vertex because no player should have a match during his rest time; 
no pair of adjacent vertices should get the same color because no two matched players can leave the stage simultaneously.
Hence, a proper total coloring of the tournament graph represents a feasible scheduling of the tournament, and the total chromatic number represents the minimum number of time periods to schedule the tournament.

\paragraph{Our contributions} The main results of this paper are:
\begin{itemize}
    \item A set covering formulation of the TCP based on maximal total matchings, which can be solved by column generation, and which yields a lower bound at least as strong as the lower bound obtained with standard ILP formulation. 

    \item The definition of three families of nontrivial valid inequalities that we call 
    the {\it vertex-clique} inequalities based on standard clique inequalities of the Stable Set Polytope,
    the {\it congruent-$2k3$ cycle inequalities}, which are based on the parity of the cycle, and the {\it even-clique inequalities}, which are based on complete graphs of even cardinality. 
    We prove that congruent-$2k3$ cycle inequalities are facet-defining only when $k=4$ (see Proposition \ref{cycle4}), while the vertex-clique and even-clique (but not the odd-clique) inequalities are always facet-defining (see Theorems \ref{vertex-clique} and \ref{kappaEven}).
    
    \item Preliminary computational results on the strength of the set covering relaxation with respect to the assignment relaxation, and on the computational strength of the valid inequalities introduced for the Total Matching Polytope.
%
    
    
\end{itemize}

\paragraph{Outline} The outline of this paper is as follows.
In the next paragraph, we fix the notation.
In Section 2, we present two Integer Linear Programming (ILP) formulations of the TCP, and we introduce the Maximum Weighted Total Matching Problem (MWTMP). 
In Section 3, we study the Total Matching Polytope, proposing several basic facet-defining inequalities.
In Section 4, we present three families of nontrivial valid inequalities: vertex-clique inequalities,
congruent-$2k3$ cycle inequalities, 
and even-clique inequalities. 
Section 5 provides computational results for the two problems, comparing the lower bounds of the two models for the TCP and the strength of the new families of inequalities generated in a cutting-plane framework.
%
In Section 6, we conclude the paper with a discussion on future works.

\paragraph{Notation}
The graphs considered in this paper are simple and undirected. 
Given a graph $G=(V,E)$, we define $n = |V|$ and $m = |E|$. 
For a vertex $v \in V$, we denote by $\delta(v)$ the set of edges incident to $v$ and by $N_{G}(v)$ the set of vertices adjacent to $v$. 
The degree of a vertex is $|\delta(v)|$, in particular, we denote by $\Delta(G)= \max\{|\delta(v)| \mid v \in V\}$. 
For a subset of vertices $U \subseteq V$, let $G[U]$ be the subgraph induced by $U$ on $G$.
We define $\delta(U):=\{e\in E \mid e=\{u,v\}, u \in U, v \in V \setminus U \}$.

%

\section{Total Coloring and Total Matching Models: ILP models}\label{sec:coloring}
In this section, we first present an assignment Integer Linear Programming (ILP) model for the TCP. 
Second, we introduce a stronger set covering formulation based on the idea of covering the elements of the graph by the minimum number of maximal total matchings.
Third, we introduce the Weighted Total Matching Problem (WTMP).
\subsection{Total Coloring: Assignment model} Let $G=(V,E)$ be a graph and let $K$ be the set of available colors, with $|K| \geq \Delta(G)+1$. 
We introduce binary variables $x_{vk} \in \{0,1\}$ for every vertex $v$ and binary variables $y_{ek} \in \{0,1\}$ for every edge $e$ to denote whether they get assigned color $k$.
Besides, we introduce the binary variables $z_k$ to indicate whether any element uses color $k$. 
Using these variables, our assignment ILP model for the TCP is as follows.
\begin{align}
\label{m1:obj}   \chi_{T}(G) \, := z_{IP}^{(A)} \, =  \min \quad & \sum_{k \in K} z_k \\
\label{m1:c1}    \mbox{s.t.} \quad 
    & \sum_{k \in K} x_{vk} = 1 & \forall v \in V \\
\label{m1:c3}    & \sum_{k \in K} y_{ek} = 1 & \forall e \in E \\
\label{m1:c4}    & x_{vk} + \sum_{e \in \delta(v)} y_{ek} \leq z_k & \forall v \in V, \forall k \in K \\
\label{m1:c5}    & x_{vk} + x_{wk} + y_{ek} \leq z_k & \forall e = \{v,w\} \in E, \forall k \in K \\
\label{m1:x}    & x_{vk} \in \{0,1\} & \forall v \in V, \forall k \in K \\
\label{m1:y}    & y_{ek} \in \{0,1\} & \forall e \in E, \forall k \in K.
\end{align}
\noindent The objective function \eqref{m1:obj} minimizes the number of used colors.
Constraints \eqref{m1:c1}--\eqref{m1:c3} ensure that every vertex and every edge get assigned a color. 
Constraint \eqref{m1:c4} enforces that all edges $e$ incident to a vertex $v$, and the vertex $v$ itself, take a different color; at the same time, the constraints guarantee that the corresponding variable $z_k$ is set to 1 whenever color $k$ is used by at least an element of $G$. 
Constraint \eqref{m1:c5} imposes that for each edge $e=\{i,j\}$ at most one element among $\{e,i,j\}$ can take color $k$, and it sets the corresponding variable $z_k$ accordingly. 
If we relax the integrality constraints \eqref{m1:x} and \eqref{m1:y}, we get a Linear Programming relaxation. 
We denote the optimal value of the LP relaxation by $z_{LP}^{(A)}$.

The LP relaxation of model \eqref{m1:obj}--\eqref{m1:y} yields the following lower bound.

\begin{proposition}
Let $G=(V,E)$ be a graph. Then, we have $\chi_{T}(G) \geq z_{LP}^{(A)} \geq \Delta+1$.
\end{proposition}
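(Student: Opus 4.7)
The plan is to establish the two inequalities separately. The first, $\chi_T(G) \geq z_{LP}^{(A)}$, is immediate: every integer feasible solution of \eqref{m1:obj}--\eqref{m1:y} is feasible for the LP relaxation with the same objective value, so the LP optimum cannot exceed the IP optimum $\chi_T(G)$. This needs no further argument.

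For the second inequality $z_{LP}^{(A)} \geq \Delta+1$, I would pick an arbitrary optimal LP solution $(x,y,z)$ and a vertex $v^\ast \in V$ achieving the maximum degree, i.e.\ $|\delta(v^\ast)| = \Delta$. The key idea is to sum constraint \eqref{m1:c4} instantiated at $v^\ast$ over all colors $k \in K$, which yields
\begin{equation*}
\sum_{k \in K} x_{v^\ast k} \;+\; \sum_{k \in K} \sum_{e \in \delta(v^\ast)} y_{ek} \;\leq\; \sum_{k \in K} z_k.
\end{equation*}
Then I would apply the equality constraints: by \eqref{m1:c1} the first term equals $1$, and by swapping the order of summation in the second term and using \eqref{m1:c3} one gets $\sum_{e \in \delta(v^\ast)} \sum_{k \in K} y_{ek} = \sum_{e \in \delta(v^\ast)} 1 = \Delta$. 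Combining these gives $\sum_{k \in K} z_k \geq \Delta + 1$, and since the left-hand side is exactly the LP objective \eqref{m1:obj} evaluated at this feasible point, the bound $z_{LP}^{(A)} \geq \Delta+1$ follows.

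There is really no hard part in this proof; it is a direct consequence of aggregating the degree-type constraint \eqref{m1:c4} with the assignment equalities. The only thing to be careful about is that $|K| \geq \Delta(G)+1$ so that the sum over $K$ makes sense, which is guaranteed by the assumption stated at the beginning of Section 2.1.
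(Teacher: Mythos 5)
Your proof is correct, but it takes a genuinely different route from the paper's. The paper argues by exhibiting the explicit fractional point $x_{vk}=y_{ek}=\tfrac{1}{\Delta+1}$, $z_k=1$ for $k=1,\dots,\Delta+1$ (and zero otherwise) and checking it is feasible for the LP relaxation. Note, however, that exhibiting a feasible point of a \emph{minimization} LP with objective value $\Delta+1$ establishes $z_{LP}^{(A)} \leq \Delta+1$; it does not by itself give the claimed lower bound. Your argument — summing constraint \eqref{m1:c4} at a maximum-degree vertex $v^\ast$ over all $k\in K$ and substituting the assignment equalities \eqref{m1:c1} and \eqref{m1:c3} to obtain
\begin{equation*}
\sum_{k\in K} z_k \;\geq\; \sum_{k\in K} x_{v^\ast k} + \sum_{e\in\delta(v^\ast)}\sum_{k\in K} y_{ek} \;=\; 1+\Delta
\end{equation*}
— is a valid surrogate (dual-type) bound that holds for \emph{every} feasible LP solution, and hence correctly delivers $z_{LP}^{(A)} \geq \Delta+1$. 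The two arguments are complementary: yours proves the stated lower bound, the paper's construction proves the matching upper bound, and together they show $z_{LP}^{(A)}=\Delta+1$ whenever $\Delta\geq 2$ (the paper's point violates \eqref{m1:c5} when $\Delta=1$, whereas your bound needs no such restriction). One tiny remark: the hypothesis $|K|\geq\Delta+1$ is not actually needed for your aggregation step itself — it is only needed to guarantee that the LP is feasible at all, so that an optimal solution exists to which the bound can be applied.
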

\begin{proof}
Let $x_{vk}=y_{ek}=\frac{1}{\Delta+1}$ for $k=1, \dots, \Delta+1$,
$z_k = 1$ for $k=1,2,\dots, \Delta+1$ and $x_{vk} = 0, y_{ek} = 0$ for $k > \Delta+1, \forall v \in V, \forall e \in E$. 
Notice that this assignment gives a feasible solution for the LP relaxation of \eqref{m1:obj}--\eqref{m1:y}.
Since $|K| \geq \Delta+1$, the assertion follows immediately. \qed

%
\end{proof}
\subsection{Total Coloring: Set Covering model} 
The assignment model \eqref{m1:obj}--\eqref{m1:y} is easy to write, but it suffers from symmetry issues: any permutation of the color classes indexed by $k$ generates the same optimal solution \cite{Margot2007,Jabrayilov2018}.
To overcome this issue and to get a stronger LP lower bound, we introduce a set covering formulation based on maximal total matchings. 
A total matching is (inclusion-wise) maximal if it is not a subset of any other total matching. 
Note that the number of maximal total matchings in a graph is strictly less than the number of total matchings.

Let $\mathcal{T}$ be the set of all maximal total matchings of $G$.
Let $\lambda_t$ be a binary decision variable indicating if the matching $t \subset \mathcal{T}$ is selected (or not) for representing a color class.
The 0--1 parameter $a_{vt}$ indicates if vertex $v$ is contained in the total matching $t$. 
Similarly, the 0--1 parameter $b_{et}=1$ indicates if edge $e$ is contained in the total matching $t$. 
The following set covering model is a valid formulation for the TCP.
\begin{align}
\label{m2:obj} \chi_{T}(G) \, = z_{IP}^{(C)}\, := \min \quad & \sum_{t \in \mathcal{T}} \lambda_{t} \\
\label{m2:c1} \mbox{s.t.} \quad & \sum_{t \in \mathcal{T}} a_{vt} \lambda_{t} \geq 1 & \forall v \in V\\
\label{m2:c2} &  \sum_{t \in \mathcal{T}} b_{et}\lambda_{t} \geq 1 & \forall e \in E\\
\label{m2:l} & \lambda_{t} \in \{0,1\} & \forall t \in \mathcal{T}.
\end{align}
\noindent Given an optimal solution of the previous problem, whenever an element of $G$ appears in $t>1$ maximal total matchings, it is always possible to recover a proper total coloring by removing the element from $t-1$ of those total matchings.
Note that the covering model has an exponential number of variables, one for each maximal total matching in $G$. 
We denote by $z^{(C)}_{LP}$ the optimum value of the LP relaxation of problem \eqref{m2:obj}--\eqref{m2:l}.

If we introduce the dual variables $\alpha_v$ for constraints \eqref{m2:c1} and the variables $\beta_e$ for constraints \eqref{m2:c2}, we can write the dual of the set covering LP relaxation as follows. 
\begin{align}
\label{m3:primal} z_{LP}^{(C)} := \min \; \quad & \sum_{t \in \mathcal{T}} \lambda_{t} \quad \mbox{ s.t. \eqref{m2:c1}--\eqref{m2:c2}}, \lambda_t \geq 0, \forall t \in \mathcal{T} & (Primal)\\
\label{m3:d1} = \max \quad & \sum_{v \in V} \alpha_{v} + \sum_{e \in E} \beta_{e} & (Dual)\\
\label{m3:d2} \mbox{s.t.} \quad & \sum_{v \in V} a_{vt} \alpha_{v} + \sum_{e \in E} b_{et} \beta_{e} \leq 1 & \forall t \in \mathcal{T}\\
\label{m3:d3} & \alpha_{v}, \beta_{e} \geq 0 & \forall v \in V, \forall e \in E.
\end{align}
For this LP covering relaxation, the following proposition holds.
\begin{proposition}
Let $G=(V,E)$ be a graph. Then, we have $\chi_T(G) \geq z_{LP}^{(C)} \geq \Delta(G)+1$.
\end{proposition}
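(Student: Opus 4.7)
The inequality $\chi_T(G) \geq z_{LP}^{(C)}$ is immediate from the fact that \eqref{m3:primal} is an LP relaxation of \eqref{m2:obj}--\eqref{m2:l}. The interesting part is showing $z_{LP}^{(C)} \geq \Delta(G)+1$, and I would obtain this via weak LP duality: by exhibiting a feasible solution of the dual \eqref{m3:d1}--\eqref{m3:d3} whose value equals $\Delta(G)+1$, we get a lower bound on the primal optimum.

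The plan is to pick a vertex $v^{\star} \in V$ of maximum degree, so $|\delta(v^{\star})| = \Delta(G)$, and to set
\[
\alpha_{v^{\star}} = 1, \qquad \beta_{e} = 1 \text{ for all } e \in \delta(v^{\star}),
\]
with all remaining $\alpha_v$ and $\beta_e$ equal to zero. The objective value of this dual solution is exactly $1 + |\delta(v^{\star})| = \Delta(G) + 1$, which matches the target bound.

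The key observation making this dual solution feasible is that the element $v^{\star}$ together with the edges in $\delta(v^{\star})$ forms a set of $\Delta(G)+1$ pairwise adjacent elements in $D = V \cup E$: $v^{\star}$ is incident to every edge in $\delta(v^{\star})$, and any two edges in $\delta(v^{\star})$ share the vertex $v^{\star}$ and are therefore incident. Consequently, any total matching $t \in \mathcal{T}$ can contain at most one element of $\{v^{\star}\} \cup \delta(v^{\star})$, which means that
\[
\sum_{v \in V} a_{vt}\, \alpha_{v} + \sum_{e \in E} b_{et}\, \beta_{e} = a_{v^{\star} t} + \sum_{e \in \delta(v^{\star})} b_{et} \leq 1
\]
for every $t \in \mathcal{T}$. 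Hence constraint \eqref{m3:d2} is satisfied, and nonnegativity \eqref{m3:d3} holds by construction.

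I do not expect a real obstacle here; the only subtle point is stating clearly why $\{v^{\star}\} \cup \delta(v^{\star})$ is a clique in the ``total adjacency'' relation, so that each $t \in \mathcal{T}$, being a set of pairwise independent elements, intersects it in at most one element. Once this is noted, dual feasibility and the objective computation are immediate, and weak duality yields $z_{LP}^{(C)} \geq \Delta(G) + 1$, completing the proof.
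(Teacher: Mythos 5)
Your proof is correct, and at the top level it takes the same route as the paper: both arguments establish $z_{LP}^{(C)} \geq \Delta(G)+1$ via weak LP duality, by exhibiting a feasible solution of the dual of the covering relaxation with objective value $\Delta(G)+1$ concentrated around a maximum-degree vertex. The certificates differ, however, and the difference is in your favor. You put weight $1$ on $v^{\star}$ and on each edge of $\delta(v^{\star})$; since $\{v^{\star}\}\cup\delta(v^{\star})$ is a set of pairwise adjacent elements of $D$, every total matching meets it in at most one element, so feasibility of the dual constraint is immediate for \emph{every} $t\in\mathcal{T}$. The paper instead sets $\alpha_v=1$ at the maximum-degree vertex and spreads weight $\tfrac{1}{2}$ over each neighbour $v_i$ \emph{and} each incident edge $e_i$, verifying the dual constraint only on a hand-picked family of $\Delta(G)+1$ total matchings $T_0,\dots,T_k$; that check does not establish feasibility for all of $\mathcal{T}$, and indeed it can fail --- in the star $K_{1,3}$ the three leaves are pairwise independent, so $\{v_1,v_2,v_3\}$ is a (maximal) total matching whose dual value under the paper's assignment is $\tfrac{3}{2}>1$. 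Your certificate avoids this because its support is a clique in the total adjacency relation, so your version is both simpler and more robust; the only polish needed is an explicit sentence invoking weak duality for the primal--dual pair, which you already gesture at.
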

\begin{proof}
Consider a vertex $v$ of maximum degree, and let $\Delta(G)=k$, where $N_{G}(v):=\{v_{1},...,v_{k}\}$ and $\delta(v):=\{e_1,...,e_{k}\}$. 
Consider the total matching $T_0 := \{v\}$, and the additional $k$ distinct total matchings $T_{i}:=\{v_{i}, e_{i+1}\}$ for all $i=1,\dots,k-1$ and $T_{k}:=\{v_k,e_{1}\}$. Hence, we have $k+1$ total matchings, which can be used to define a feasible dual solution: we set $\alpha_v=1$, $\alpha_{v_i} = \beta_{e_{i+1}} = \frac{1}{2}$ for all $i=1,\dots,k-1$ and $\alpha_{v_k} = \beta_{e_{1}} = \frac{1}{2}$. 
Thus, summing up all these dual values in the dual objective function, we get the valid lower bound result $z_{LP}^{(C)} \geq \Delta(G)+1$.
\qed
\end{proof}

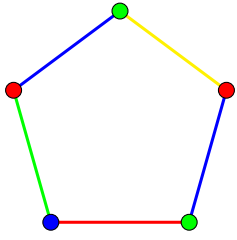
\begin{figure}[t!]
		\centering
		\begin{tikzpicture}[scale=0.35]
		
		\coordinate (v') at (2*1.2-2.5+16,2*1.5-3.2) ;
		\coordinate (w') at (2*3.8-2.5+16,2*1.5-3.2) ;
		\coordinate (z') at (2*4.5-2.5+16,2*4-3.2) ;
		\coordinate (x') at (2*0.5-2.5+16,2*4-3.2) ;
		\coordinate  (y') at (2*2.5-2.5+16,2*5.5-3.2);

	    \draw[green, line width=0.04cm] (x')--(v');
	    \draw[blue, line width=0.04cm] (w')--(z');
		\draw[blue, line width=0.04cm]  (x')--(y');
		\draw[red, line width=0.04cm]  (v')--(w');
		\draw[yellow, line width=0.04cm] (z')--(y');
		
		\draw[fill=blue]  (v') circle [radius=0.30cm];
		\draw[fill=green] (w') circle [radius=0.30cm];
		\draw[fill=red] (z') circle [radius=0.30cm];
		\draw[fill=green] (y') circle [radius=0.30cm];
		\draw[fill=red] (x') circle [radius=0.30cm];
	
		\end{tikzpicture}
		
		\caption{A total coloring of a cycle of length $5$ with $k=4=\Delta(G)+2$ colors. The optimal value of the LP relaxation of the assignment model \eqref{m1:obj}--\eqref{m1:y} is equal to $z_{LP}^{(A)}=3$, while the optimal value of the LP relaxation of the set covering model \eqref{m2:obj}--\eqref{m2:l} is equal to $z_{LP}^{(C)}=\frac{10}{3}$.}
		\label{fig:five odd hole}
\end{figure}

The example in Figure \ref{fig:five odd hole} shows that the optimal value of the LP relaxation of the set covering model can be tighter than the value of the LP assignment relaxation.
Next, we prove that the LP covering relaxation always provides a lower bound at least as strong as that of the LP assignment relaxation. Our proof uses the equivalence of the set covering relaxation $z_{LP}^{(C)}$ with a set partitioning relaxation, where the inequality constraints \eqref{m2:c1}--\eqref{m2:c2} are replaced with equality constraints. Herein, we denote by  $z_{LP}^{(P)}$ the optimal value of the LP partitioning relaxation. The proof of the following result is straightforward.
\begin{lemma}\label{cov-part}
$z_{LP}^{(C)} = z_{LP}^{(P)}$.
\end{lemma}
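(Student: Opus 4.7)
The direction $z_{LP}^{(C)} \le z_{LP}^{(P)}$ is immediate: any solution feasible for the LP partitioning relaxation, which has equality constraints in place of \eqref{m2:c1}--\eqref{m2:c2}, is \emph{a fortiori} feasible for the LP covering relaxation with the same objective value, so its optimum cannot be smaller than $z_{LP}^{(C)}$. I would dispatch this in a single line.

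The reverse direction $z_{LP}^{(P)} \le z_{LP}^{(C)}$ is the substantive half. My plan is to start from an optimal covering LP solution $\lambda^\ast$ and convert it into a partitioning-feasible solution of the same cost by iteratively eliminating excess coverage. Concretely, if some element $a \in D$ has $\sum_t a_{vt}\lambda^\ast_t = 1+\varepsilon > 1$, I would pick a matching $t \in \mathcal{T}$ with $a \in t$ and $\lambda^\ast_t > 0$ and a maximal total matching $t' \in \mathcal{T}$ that extends $t \setminus \{a\}$ while excluding $a$. Such a $t'$ exists because $t \setminus \{a\}$ is itself a total matching that admits a maximal extension, and one can steer the extension away from $a$ by first inserting any other available element. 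Transferring $\delta = \min(\lambda^\ast_t, \varepsilon)$ of mass from $\lambda^\ast_t$ to $\lambda^\ast_{t'}$ preserves the objective $\sum_t \lambda_t$, leaves the coverage of every element of $t \cap t'$ unchanged, and strictly reduces the excess on $a$.

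The main obstacle is guaranteeing termination while controlling the side effect that elements of $t' \setminus t$ may become over-covered by the transfer. I would address this by choosing $t'$ of minimum cardinality (so that $|t' \setminus t|$ is as small as possible, ideally $t'=t\setminus\{a\}$ when that set is already maximal) and tracking a potential function such as the pair (total excess $\sum_a(\sum_t a_{vt}\lambda_t - 1)$, support size of $\lambda$) under a lexicographic order, so that each iteration strictly decreases it. Once termination is established, the resulting $\lambda^{\ast\ast}$ is partitioning-feasible with $\sum_t \lambda^{\ast\ast}_t = \sum_t \lambda^\ast_t = z_{LP}^{(C)}$, yielding $z_{LP}^{(P)} \le z_{LP}^{(C)}$ and hence the claimed equality. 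A dual reading of the same argument is that any optimal dual solution of the partitioning LP (with free $\alpha_v,\beta_e$) can be pushed to a nonnegative one without loss of objective, so that the partitioning dual optimum coincides with the covering dual optimum \eqref{m3:d1}--\eqref{m3:d3}.
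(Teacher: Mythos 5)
Your easy direction is fine (modulo a small point: a partitioning solution supported on non-maximal total matchings is not literally a covering solution over $\mathcal{T}$, but extending each column to a maximal total matching only increases coverage, so feasibility for the covering LP is preserved). The substantive direction, however, has two genuine gaps. First, the maximal total matching $t'$ you need --- one containing $t\setminus\{a\}$ but not $a$ --- need not exist: if every element outside $t$ that is independent of $t\setminus\{a\}$ is also independent of $a$, then every maximal extension of $t\setminus\{a\}$ is forced to contain $a$, so there is nothing to "steer" with. An isolated vertex is the cleanest instance (it lies in \emph{every} maximal total matching), but the same obstruction arises for pendant configurations. Second, your termination argument does not go through: transferring mass $\delta$ from $t$ to $t'$ removes $\delta$ of excess at $a$ but can add up to $\delta$ of new excess at \emph{each} element of $t'\setminus t$, so the total excess $\sum_a(\sum_t a_{vt}\lambda_t-1)^+$ can strictly increase at a step, and the lexicographic potential you propose is not shown to decrease.

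Both problems disappear once you use the structure the paper actually has in mind: the partitioning model arises from the Dantzig--Wolfe reformulation, so its columns range over \emph{all} total matchings, not just maximal ones. You may therefore repair an over-covered element $a$ by simply \emph{deleting} $a$ from (a fractional portion of) the matchings that contain it --- order the matchings $t\ni a$ arbitrarily, keep $a$ in a prefix of total mass exactly $1$, split the matching that crosses the threshold into two columns (one with $a$, one without), and drop $a$ from the rest. A subset of a total matching is still a total matching, deletion never increases the coverage of any element, and the objective $\sum_t\lambda_t$ is unchanged; hence processing the $n+m$ elements once, in any order, terminates and produces a partitioning-feasible solution of the same cost. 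This is the fractional analogue of the remark the paper makes immediately after introducing the covering model (removing a repeated element from all but one of the total matchings covering it), and it is presumably the "straightforward" argument the authors intend, since they give no explicit proof.
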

%
%
%
We remark that the set partitioning model can be obtained by applying Dantzig-Wolfe decomposition to the assignment formulation, keeping constraints (2) and (3) in the master and moving constraints (4) and (5) to the subproblem.
For a detailed survey of this technique, we refer the reader to \cite{Barnhart1998}.

We are now ready to prove the following proposition.
\begin{proposition}
Let $G=(V,E)$ be a graph. Then, we have $\chi_T(G) \geq z_{LP}^{(C)} \geq z_{LP}^{(A)} \geq \Delta+1$.
\end{proposition}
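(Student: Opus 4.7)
The plan is to verify the chain left to right, recycling Propositions 1 and 2 for the outer inequalities and concentrating the real work on the middle inequality $z_{LP}^{(C)} \geq z_{LP}^{(A)}$. The leftmost inequality $\chi_T(G) \geq z_{LP}^{(C)}$ is the standard relaxation bound on \eqref{m2:obj}--\eqref{m2:l}, and the rightmost inequality $z_{LP}^{(A)} \geq \Delta+1$ is already established in Proposition 1. Only the middle inequality is new.

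For the middle inequality, I would first invoke Lemma \ref{cov-part} to replace $z_{LP}^{(C)}$ by $z_{LP}^{(P)}$, the value of the LP set partitioning relaxation. The strategy is then to transform any feasible partition LP solution into a feasible assignment LP solution with the same objective value. Given an optimal $\lambda^*$ for the partition relaxation, enumerate the maximal total matchings with $\lambda^*_t > 0$ as $t_1, \dots, t_r$ and dedicate one color per matching: set $z_k = \lambda^*_{t_k}$, $x_{vk} = \lambda^*_{t_k} a_{v,t_k}$, and $y_{ek} = \lambda^*_{t_k} b_{e,t_k}$ for $k = 1, \dots, r$, leaving the remaining color slots at zero.

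Constraints \eqref{m1:c1} and \eqref{m1:c3} reduce exactly to the partitioning equalities $\sum_t a_{vt} \lambda^*_t = 1$ and $\sum_t b_{et} \lambda^*_t = 1$. For \eqref{m1:c4} and \eqref{m1:c5}, the key observation is that each $t_k$ is a total matching, so within a single color $k$ the quantities $a_{v,t_k} + \sum_{e \in \delta(v)} b_{e,t_k}$ and $a_{v,t_k} + a_{w,t_k} + b_{e,t_k}$ are each at most $1$ (by the very definition of independence of elements). Multiplying by $\lambda^*_{t_k}$ bounds each left-hand side by $\lambda^*_{t_k} = z_k$, as required. The objective then sums to $\sum_{k} \lambda^*_{t_k} = z_{LP}^{(P)}$, which yields $z_{LP}^{(A)} \leq z_{LP}^{(P)} = z_{LP}^{(C)}$ and closes the chain.

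The main obstacle I anticipate is the implicit requirement that the color palette be large enough, i.e.\ $|K| \geq r$. I would handle this by noting that adding unused color slots can only weakly decrease $z_{LP}^{(A)}$, so we may enlarge $|K|$ up to $|V|+|E|$ (a trivial upper bound on $\chi_T(G)$) without altering the inequality being shown. Equivalently, this is precisely the Dantzig--Wolfe reformulation equivalence alluded to in the paragraph preceding the statement: the partition master is a convexification of the assignment model over the integer hull of the per-color subsystem, hence cannot give a weaker LP bound.
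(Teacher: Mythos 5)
Your proof follows essentially the same route as the paper's: invoke Lemma \ref{cov-part} to pass to the partitioning relaxation, dedicate one color to each maximal total matching with positive weight, and check that the induced assignment solution is feasible with the same objective value; the paper's map $\phi$ is exactly your one-color-per-matching bookkeeping, and your verification of \eqref{m1:c4}--\eqref{m1:c5} via independence of elements within a single total matching is precisely the content of the paper's ``by construction'' remark. The only place you go beyond the paper is the palette-size caveat, and there the monotonicity runs the wrong way: enlarging $K$ can only weakly \emph{decrease} the minimum $z_{LP}^{(A)}$, so establishing the bound for an enlarged palette does not bound the original (larger) value. The correct patch is that $z_{LP}^{(A)}$ is invariant under enlarging $K$ once $|K| \geq \Delta(G)+1$: average any feasible solution uniformly over the color classes and use that $\sum_{k} z_k \geq |\delta(v)|+1$ for every feasible point (sum \eqref{m1:c4} over $k\in K$), which restores feasibility after averaging at no cost in the objective; the paper silently skips this issue altogether, even though its map $\phi:\mathcal{T}\rightarrow K$ would nominally require $|K| \geq |\mathcal{T}|$.
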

\begin{proof}
We first show that any feasible solution to the LP relaxation of the partitioning model can be converted into a feasible solution of the LP relaxation of the assignment model with the same objective value function.
Then, by Lemma \ref{cov-part}, we get the result for the optimal value of the set covering model.
We denote by $\lambda^{*}$ a feasible solution to the linear relaxation of the partitioning model.
Now, consider the mapping $\phi:\mathcal{T} \longrightarrow K$ which assigns to each total matching a color.
Notice that, since we are considering maximal total matchings, we can uniquely define a color class for each total matching.
Let us define $(x^{*}_{vk},y^{*}_{ek},z^{*}_{k})$ a feasible solution of the LP relaxation of the assignment model in the following way:
\begin{equation*}
    x_{vk}^{*} = \sum\limits_{t \in \mathcal{T}:v \in t,\phi(t)=k} \lambda^{*}_{t}, \quad y_{ek}^{*} = \sum\limits_{t \in \mathcal{T}:e \in t,\phi(t)=k} \lambda^{*}_{t} \quad\text{and}\quad z_k^{*} = \sum\limits_{t \in \mathcal{T}: \phi(t)=k}\lambda^{*}_{t}.
\end{equation*}
By construction of the solution, we observe that the constraints $x_{vk}^{*}+ \sum\limits_{ e \in \delta(v)}x_{vk}^{*} \leq z^{*}_k$ and $x_{vk}^{*}+x_{wk}^{*}+y_{ek}^{*} \leq z_{k}^{*}$ are satisfied.
Moreover, by the equality constraints imposed in the partitioning model, the assignment constraints \eqref{m1:c1}--\eqref{m1:c3} are satisfied.
Then, we show an instance of graph where $z_{LP}^{(C)}>z_{LP}^{(A)}$.
Consider the cycle of length $5$ of Figure 1.
It turns out that the optimal value of $z_{LP}^{(A)}$ is 3, and it is obtained by setting
$x_{v1}=1, \forall v \in V(C_5)$, $x_{vk} =0, \forall k \in K\setminus\{1\}$, $y_{e1}=1, \forall e \in E(C_5)$ and $z_1 = 3,z_k = 0, \forall k \in K\setminus\{1\}$.
The optimal value $z_{LP}^{(C)}$ is $\frac{10}{3}$, and it uses four total matchings.
It is obtained by setting $\lambda_{t_1}^{*}= \frac{1}{3}$, $\lambda_{t_j}^{*}=1, j =2,3,4$ for the remaining total matchings.
The reader can easily verify that this is an optimal solution of this instance.
This completes the proof.\qed
\end{proof}

\subsection{Column generation and Weighted Total Matchings} 
We can solve problem \eqref{m3:primal}, or equivalently its dual \eqref{m3:d1}--\eqref{m3:d3}, by considering a subset $\mathcal{\bar{T}} \subset \mathcal{T}$, and by applying a Column Generation algorithm, where looking for a primal negative reduced cost variables corresponds to look for a violated dual constraint \cite{Lubbecke2005,Barnhart1998,Desaulniers2006,Gualandi2013}.
Given a dual feasible solution $\bar{\alpha}$ and $\bar{\beta}$, the separation problem of the dual constraints \eqref{m3:d2} reduces to the following Maximum Weighted Total Matching.
%
\begin{align}
\label{m5:obj}   \nu_T(G, \bar \alpha, \bar \beta) := \, \max \quad & \sum_{v \in V} \bar{\alpha}_v x_v + \sum_{e \in E} \bar{\beta}_e y_e \\
\label{m5:c1}    \mbox{s.t.} \quad 
                 & x_v + \sum_{e \in \delta(v)} y_e  \leq 1 & \forall v \in V \\
\label{m5:c2}    & x_v + x_w + y_e\leq 1  & \forall e=\{v,w\} \in E \\
\label{m5:vr}    & x_v,y_e \in \{0,1\} & \forall v \in V, \forall e \in E.
\end{align}

\noindent Note that constraints \eqref{m5:c1} and \eqref{m5:c2} together define the valid constraints for total matchings of $G$. 
In addition, whenever the optimal value $\nu_T(G, \bar \alpha, \bar \beta) > 1$, the corresponding total matching gives a violated constraint \eqref{m3:d2}.
That is, problem \eqref{m5:obj}--\eqref{m5:vr} is the pricing subproblem for solving our set covering model by Column Generation.

Motivated by the solution of the pricing subproblem \eqref{m5:obj}--\eqref{m5:vr}, in the next section, we study valid (facet) inequalities of the Total Matching Polytope.

\section{Facet inequalities for the Total Matching Polytope}\label{sec:matching}
In this section, we study the feasible region of the MWTMP \eqref{m5:obj}--\eqref{m5:vr}, and we provide facet-defining inequalities for the corresponding polytope.
The most important original contribution of this paper is given in Theorem \ref{kappaEven}, where we prove that 
the even-clique inequalities are facet-defining for the Total Matching Polytope.

\subsection{Total Matching Polytope}
The Total Matching Polytope is defined as the {\it convex hull} of characteristic vectors of total matchings. 
Hence, given a total matching $T$, the corresponding characteristic vector is defined as follows.
\begin{equation*}
  \chi[T]=\left\{
  \begin{array}{@{}ll@{}}
    z_{a}=1 & \text{if}\ a \in T \subseteq D = V \cup E, \\
    z_{a}=0 & \text{otherwise}.
  \end{array}\right.
\end{equation*} 
where $z = (x,y) \in \{0,1\}^{n+m}$, $x$ corresponds to the vertex variables and $y$ to the edges variables.
\begin{definition}
The {\bf Total Matching Polytope} of a graph $G=(V,E)$ is defined as:
\[
P_{T}(G) := \mbox{conv}\{\chi[T] \subseteq \mathbb{R}^{n+m} \mid T \subseteq D = V \cup E \mbox{ is a total matching} \}.
\]
\end{definition}
%
The following proposition implies that the valid inequalities that are facet-defining are nonredundant, and, hence, they represent a minimal system defining $P_{T}(G)$.
\begin{proposition}\label{fulldim}
$P_{T}(G)$ is full-dimensional, that is, $\dim(P_{T}(G))=n+m$.
\end{proposition}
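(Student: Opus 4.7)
The plan is to exhibit $n+m+1$ affinely independent characteristic vectors of total matchings, which immediately forces $\dim(P_T(G)) \geq n+m$; combined with $P_T(G) \subseteq \mathbb{R}^{n+m}$, this yields equality.

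The natural candidates are the simplest possible total matchings. First, the empty set $\emptyset \subseteq D$ is trivially a total matching (the pairwise-independence condition is vacuous), so the zero vector $\mathbf{0} \in \mathbb{R}^{n+m}$ belongs to $P_T(G)$. Next, for every single element $a \in D = V \cup E$, the singleton $\{a\}$ is also a total matching (again the pairwise-independence condition is vacuous over a one-element set), so each standard unit vector $\chi[\{a\}] = e_a \in \mathbb{R}^{n+m}$ belongs to $P_T(G)$.

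To verify affine independence, I would subtract the point $\chi[\emptyset] = \mathbf{0}$ from each of the $n+m$ singleton vectors $\chi[\{a\}] = e_a$; the resulting differences are exactly the standard basis vectors of $\mathbb{R}^{n+m}$, which are linearly independent. Hence the $n+m+1$ points $\{\mathbf{0}\} \cup \{e_a : a \in V \cup E\}$ are affinely independent in $P_T(G)$, giving $\dim(P_T(G)) \geq n+m$. Since $P_T(G) \subseteq \mathbb{R}^{n+m}$, we conclude $\dim(P_T(G)) = n+m$, as desired.

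There is essentially no obstacle here: the only point that requires a moment of care is noticing that singletons (and the empty set) are valid total matchings, which follows directly from the definition since the independence condition is imposed only between distinct pairs of elements. No nontriviality assumption on $G$ is needed beyond $G$ being a simple undirected graph.
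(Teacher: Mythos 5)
Your proposal is correct and follows exactly the paper's own argument: the origin together with the $n+m$ unit vectors $\chi[\{a\}]$ for $a \in V \cup E$ gives $n+m+1$ affinely independent points in $P_T(G)$. Your additional remark verifying affine independence by subtracting the origin is a slightly more explicit version of the same reasoning.
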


\begin{proof}
We have that the origin, the unit vectors $\chi[\{v\}]$ for every $v \in V$ and $\chi[\{e\}]$ for every $e \in E$ belong to $P_{T}(G)$, and clearly they are linearly independent. 
Thus, we have $n+m+1$ affinely independent points. \qed
\end{proof}

We establish now an important connection between total matchings of a graph and the stable sets of the {\it total graph}, defined as in \cite{Total}.
%
%
Consider the graph $G$ and its corresponding line graph $L(G)$, that is, the graph obtained from $G$ having one vertex for each edge $e \in E(G)$, and where two vertices are linked by an edge if the corresponding edges in $G$ are incident to the same vertex in $G$.
Starting from the line graph, we construct a new graph $H = (V \cup V(L(G)), E(L(G)) \cup E')$, that we will call the \textit{line-full} graph, where $E'$ is the set of edges connecting the vertices of $L(G)$ to vertices of $G$, if and only if $v \in V(L(G))$ is an edge of $G$.
We call a \textit{doubling} of an edge the operation that adds an edge between a pair of vertices.
\begin{definition}
Let $G$ be a graph and $H$ its corresponding line-full graph. The graph $W$ obtained from $H$ applying a doubling of an edge for every pair of vertices $\{v,w\} \in V(H) \setminus V(L(G))$ such that $e = \{v,w\} \in E(G)$, is called the {\bf total graph} of $G$.
\end{definition}
Note that in the line graph, if $|\delta(v)|=l$, then we have a corresponding clique $K_{l}$. In addition, by doubling the edges, we can create triangles in the total graph. 
Hence, as shown in Figure \ref{fig2}, the total graph can be described as the union of cliques $K_{3}$ and general cliques. 
We can prove that total matchings of $G$ correspond to stable sets of its total graph $W$. 
In the following, we denote as $P_{stable}$ the Stable Set Polytope.

\begin{proposition}\label{total graph}
Let $G$ be a graph and $W$ its total graph. Then, $P_{T}(G)=P_{stable}(W)$.
\end{proposition}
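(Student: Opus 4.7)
The plan is to establish a natural bijection between the ground set $D = V \cup E$ of $G$ and the vertex set $V(W)$ of the total graph, and then verify that adjacency is preserved in such a way that total matchings of $G$ map exactly to stable sets of $W$. Once this is shown for integral vectors (characteristic vectors), the equality of the two convex hulls follows immediately.

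First I would identify $V(W)$ with $D$ in the obvious way: each $v \in V$ corresponds to itself in $V(H) \setminus V(L(G))$, and each $e \in E$ corresponds to the vertex of $L(G)$ that represents $e$. Under this identification, the characteristic vector of any $T \subseteq D$ in $\mathbb{R}^{n+m}$ is the same as the characteristic vector of the corresponding subset of $V(W)$. So it suffices to prove that $T$ is a total matching of $G$ if and only if the corresponding set is stable in $W$.

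Next I would carry out the adjacency check by a short case analysis on the types of elements $a,b \in D$. The three cases mirror the three steps in the construction of $W$:
\begin{itemize}
    \item If $a,b \in E$, then $a$ and $b$ are adjacent in $D$ iff they share an endpoint in $G$; by the definition of the line graph, this is exactly the condition for $a,b$ to be adjacent in $L(G)$, and hence in $W$.
    \item If $a \in V$ and $b \in E$, then $a$ and $b$ are adjacent in $D$ iff $a$ is an endpoint of $b$ in $G$; this is precisely the criterion used to add edges of $E'$ in the line-full graph $H$, and $W$ inherits all such edges.
    \item If $a,b \in V$, then $a$ and $b$ are adjacent in $D$ iff $\{a,b\} \in E(G)$; these edges are absent in $H$ but are inserted exactly by the doubling operation in the definition of $W$.
\end{itemize}
Thus for every pair $a,b \in D$, the pair is adjacent in $D$ if and only if it is adjacent in $W$, so pairwise independence in $D$ coincides with pairwise non-adjacency in $W$.

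Finally, I would conclude: $T \subseteq D$ is a total matching of $G$ iff the corresponding vertex set of $W$ is a stable set, so the set of characteristic vectors of total matchings of $G$ equals the set of characteristic vectors of stable sets of $W$. Taking convex hulls on both sides yields $P_T(G) = P_{\text{stable}}(W)$. I do not expect any real obstacle here; the proof is essentially a bookkeeping verification that the three construction steps (line graph, attaching $V$ via $E'$, and doubling edges for $E(G)$) produce precisely the adjacency relation that the paper defined on $D$ in Section 1. The only thing to be slightly careful about is the doubling step for $V$--$V$ adjacency, since it is the one piece of structure not already present in the line-full graph $H$.
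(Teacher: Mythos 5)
Your proposal is correct and follows the same route as the paper: the paper's proof simply asserts that characteristic vectors of stable sets of $W$ coincide with characteristic vectors of total matchings of $G$ and takes convex hulls, while you fill in the routine verification (the identification of $D$ with $V(W)$ and the three-case adjacency check) that the paper leaves implicit. No substantive difference in approach.
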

\begin{proof}
The characteristic vectors of the stable sets of $W$ correspond to the characteristic vectors of total matchings of $G$, and, hence, the vertices of $P_{stable}(W)$ are the vertices of $P_{T}(G)$.\qed
\end{proof}
\begin{figure}[!t]
		\centering
		\begin{tikzpicture}[scale=0.50]
	
		\coordinate (v) at (2*1.2-2.5,2*1.5-3.2) ;
		\coordinate (w) at (2*3.8-2.5,2*1.5-3.2) ;
		\coordinate (z) at (2*4.5-2.5,2*4-3.2) ;
		\coordinate (x) at (2*0.5-2.5,2*4-3.2) ;
		\coordinate (c) at (2.5,3.2) ;
		\coordinate  (y) at (2*2.5-2.5,2*5.5-3.2);
		\draw[fill=black,line width=0.03 cm] (x) circle [radius=0.15cm];
		\draw[fill=black,line width=0.03 cm] (y) circle [radius=0.15cm];
		\draw[fill=black,line width=0.03 cm] (z) circle [radius=0.15cm];
		\draw[fill=black,line width=0.03 cm] (w) circle [radius=0.15cm];
		\draw[fill=black,line width=0.03 cm] (v) circle [radius=0.15cm];
		\draw[fill=black,line width=0.03 cm]  (c) circle [radius=0.15cm];
	    \draw[line width=0.02 cm] (c)--(z);
	    \draw[line width=0.02 cm] (c)--(w);
	    \draw[line width=0.02 cm] (y)--(c);
	    \draw[line width=0.02 cm] (c)--(v);
	    \draw[line width=0.02 cm] (c)--(x);
	    
		\coordinate (v1) at (1.2+16,1.5) ;
		\coordinate (w1) at (3.8+16,1.5) ;
		\coordinate (z1) at (4.5+16,4) ;
		\coordinate (x1) at (0.5+16,4) ;
		\coordinate (c1) at (2.5+16,3.2) ;
		\coordinate  (y1) at (2.5+16,5.5);
		\coordinate (v') at (2*1.2-2.5+16,2*1.5-3.2) ;
		\coordinate (w') at (2*3.8-2.5+16,2*1.5-3.2) ;
		\coordinate (z') at (2*4.5-2.5+16,2*4-3.2) ;
		\coordinate (x') at (2*0.5-2.5+16,2*4-3.2) ;
		\coordinate  (y') at (2*2.5-2.5+16,2*5.5-3.2);
		\draw[fill=black,line width=0.03 cm] (x1) circle [radius=0.15cm];
		\draw[fill=black,line width=0.03 cm] (y1) circle [radius=0.15cm];
		\draw[fill=black,line width=0.03 cm] (z1) circle [radius=0.15cm];
		\draw[fill=black,line width=0.03 cm] (w1) circle [radius=0.15cm];
		\draw[fill=black,line width=0.03 cm] (v1) circle [radius=0.15cm];
		\draw[fill=black,line width=0.03 cm] (c1) circle [radius=0.15cm];
		\draw[fill=black,line width=0.03 cm]  (v') circle [radius=0.15cm];
		\draw[fill=black,line width=0.03 cm] (w') circle [radius=0.15cm];
		\draw[fill=black,line width=0.03 cm] (z') circle [radius=0.15cm];
		\draw[fill=black,line width=0.03 cm] (y') circle [radius=0.15cm];
		\draw[fill=black,line width=0.03 cm] (x') circle [radius=0.15cm];
	
	    \draw[blue,line width=0.03 cm] (v1)--(w1);
	    \draw[blue,line width=0.03 cm] (v1)--(z1);
	    \draw[blue,line width=0.03 cm] (v1)--(x1);
	    \draw[blue,line width=0.03 cm] (v1)--(y1);
	    \draw[blue,line width=0.03 cm] (v1)--(c1);
	    \draw[blue,line width=0.03 cm] (w1)--(z1);
	    \draw[blue,line width=0.03 cm] (w1)--(x1);
	    \draw[blue,line width=0.03 cm] (w1)--(y1);
	    \draw[blue,line width=0.03 cm] (w1)--(c1);
	    \draw[blue,line width=0.03 cm] (z1)--(x1);
	    \draw[blue,line width=0.03 cm] (z1)--(y1);
	    \draw[blue,line width=0.03 cm] (z1)--(c1);
	    \draw[blue,line width=0.03 cm] (x1)--(c1);
	    \draw[blue,line width=0.03 cm] (x1)--(y1);
	    \draw[blue,line width=0.03 cm] (c1)--(y1);
	    \draw[line width=0.03 cm] (v')--(v1);
	    \draw[line width=0.03 cm] (w')--(w1);
	    \draw[line width=0.03 cm] (x')--(x1);
	    \draw[line width=0.03 cm] (y')--(y1);
	    \draw[line width=0.03 cm] (z')--(z1);
	    \draw[green, line width=0.03 cm] (z') to[out=140, in =70] (c1);
	    \draw[green, line width=0.03 cm] (v') to[out=90, in =190] (c1);
	    \draw[green, line width=0.03 cm] (x') to[out=50, in =90] (c1);
	    \draw[green, line width=0.03 cm] (y') to[out=-30, in =0] (c1);
	    \draw[green, line width=0.03 cm] (w') to[out=70, in =0] (c1);
		\end{tikzpicture}
		\caption{A star graph on the left and the corresponding total graph on the right.\label{fig2}}
		\label{fig:unified}
\end{figure}
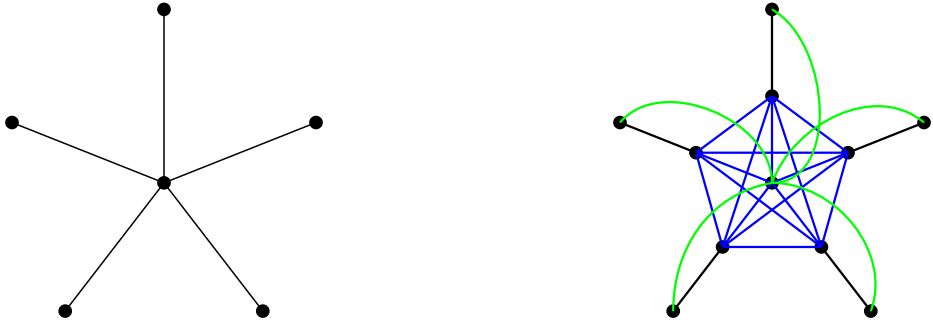
\subsection{Facet-defining inequalities}
In the following paragraphs, we prove that the valid inequalities describing the feasible region of the Total Matching Polytope are all facet-defining.

\begin{proposition}
Let $G$ be a graph. Then, the inequalities
\begin{align}
\label{b1:v} & x_v + \sum_{e \in \delta(v)} y_{e} \leq 1 & \forall v \in V \\
\label{b1:e} & x_{v} + x_{w} + y_e \leq 1  & \forall e=\{v,w\} \in E \\
\label{b1:a} & x_{v},y_{e} \geq 0 & \forall v \in V, \forall e \in E .
\end{align}
are facet-defining for the Total Matching Polytope $P_{T}(G)$.
\end{proposition}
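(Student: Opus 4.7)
The plan is to verify each inequality via the standard facet criterion: since $P_{T}(G)$ is full-dimensional of dimension $n+m$ by Proposition \ref{fulldim}, a valid inequality $\pi^\top z \leq \pi_{0}$ is facet-defining precisely when $n+m$ affinely independent total matchings satisfy it with equality. For each of the three families I will exhibit such a collection explicitly.

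For the nonnegativity inequality $x_v \geq 0$ (the argument for $y_e \geq 0$ is symmetric), the obvious choice is $\chi[\emptyset]$ together with the singletons $\chi[\{a\}]$ for every $a \in (V \cup E) \setminus \{v\}$. This yields $1+(n-1)+m = n+m$ points all satisfying $x_v = 0$; after translating by the origin, the remaining $n+m-1$ vectors are distinct standard unit vectors and are therefore linearly independent.

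For the vertex inequality $x_v + \sum_{e \in \delta(v)} y_{e} \leq 1$, let $d = |\delta(v)|$ and write $\delta(v) = \{e_1, \ldots, e_d\}$ with $e_i = \{v,u_i\}$. I will exhibit $n+m$ total matchings each containing exactly one element of $\{v\} \cup \delta(v)$. The candidates are: the singleton $\{v\}$; the sets $\{v,u\}$ for each $u \in V \setminus (N_G(v) \cup \{v\})$; the sets $\{v,f\}$ for each $f \in E \setminus \delta(v)$; the singletons $\{e_i\}$ for $i=1,\ldots,d$; and, for each neighbor $u_i$, a matching $\{u_i, e_j\}$ with $j \neq i$ (any such $j$ exists when $d \geq 2$; the degenerate case $d=1$ is subsumed by the corresponding edge inequality). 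The counts sum to $1+(n-1-d)+(m-d)+d+d = n+m$, and each listed set is easily verified to be a total matching.

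For the edge inequality $x_v + x_w + y_e \leq 1$ with $e = \{v,w\}$, I proceed analogously by splitting the $n+m$ matchings into three groups according to which of $v$, $w$, $e$ is the ``base'' element, and extending each base by an additional independent vertex or edge. The total graph viewpoint of Proposition \ref{total graph} is convenient here: the inequality is the clique inequality associated with the triangle $\{v,w,e\}$ of $W$, and this triangle is always a maximal clique of $W$, so enough extensions exist to reach $n+m$ points. The main obstacle in both the vertex-inequality and edge-inequality cases will be proving affine independence of the constructed matchings; I plan to handle this by ordering the vectors so that their $(n+m) \times (n+m)$ matrix of characteristic vectors is block-triangular, with a constant base coordinate within each block and an augmenting coordinate running over a subset of the standard basis, which yields a nonsingular matrix after translating by one point per block.
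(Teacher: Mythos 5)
Your proposal is correct, but it takes a genuinely different route from the paper. The paper does not construct tight points at all: it passes to the total graph $W$ of $G$ (Proposition \ref{total graph}), observes that $\delta(v)\cup\{v\}$ and $\{v,w,e\}$ become maximal cliques of $W$, and then invokes Padberg's theorem that maximal-clique and nonnegativity inequalities are facet-defining for the stable set polytope. Your argument is instead a direct, self-contained verification: using full-dimensionality (Proposition \ref{fulldim}) you exhibit $n+m$ affinely independent characteristic vectors on each face, and your counts and independence argument check out (e.g.\ for the vertex inequality, $1+(n-1-d)+(m-d)+d+d=n+m$, and after translating by $\chi[\{v\}]$ the difference vectors reduce to unit vectors plus the $2d$ independent vectors $\mathbf{1}_{e_i}-\mathbf{1}_v$ and $\mathbf{1}_{u_i}+\mathbf{1}_{e_{j(i)}}-\mathbf{1}_v$). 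What the paper's route buys is brevity and a clean appeal to known stable-set polyhedral theory; what your route buys is independence from the total-graph machinery and an explicit list of tight total matchings, which is essentially the technique the paper itself later uses for Theorems \ref{PadbergMaxClique} and \ref{kappaEven}. One further point in your favor: your construction forces you to notice the degenerate case $d=|\delta(v)|=1$, where the vertex inequality $x_v+y_e\leq 1$ is dominated by the edge inequality $x_v+x_w+y_e\leq 1$ and hence is \emph{not} facet-defining; the paper's claim that $A[\delta(v)\cup\{v\}]$ ``is easy to see'' to be a maximal clique silently fails there, since $\{v,e\}$ extends to the triangle $\{v,w,e\}$ in the total graph. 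You should state explicitly that the proposition requires $|\delta(v)|\geq 2$ (or that $v$ is not a pendant vertex) rather than saying the case is ``subsumed,'' but otherwise the argument is sound.
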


\begin{proof}
Let $A=(V',E')$ be the total graph associated to $G$.
Now, consider a vertex $v \in V$ and an edge $e:=\{u,w\} \in E$.
By construction of $A$, the subgraphs $A[\delta(v) \cup \{v\}]$ and $A[e \cup \{u,w\}]$ in $A$ correspond to cliques $K_{|\delta(v)|+1}$ and $K_3$ respectively.
Moreover, it is easy to see that they are maximal cliques.
Then, since $P_{stable}(A)= P_{T}(G)$ by Proposition (\ref{total graph}) and using the fact that maximal cliques and nonnegativity constraints are facet-defining inequalities for $P_{stable}(A)$ (e.g., see \cite{Padberg1973}), we get the result.\qed
\end{proof}

\subsection{Perfect Total Matchings}
A total matching is {\it perfect} if every vertex of the graph is covered by a total matching, that is, every vertex is either in the total matching or one of its incident edges belongs to the total matching. 
We prove next, that for any graph $G$, we can always find a perfect total matching.
\begin{proposition}
Every graph $G$ has a perfect total matching.
\end{proposition}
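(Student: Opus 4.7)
The plan is to construct a perfect total matching explicitly from any maximal (ordinary) matching of $G$. Concretely, I would take any inclusion-wise maximal matching $M \subseteq E$ and let $U \subseteq V$ be the set of vertices not covered by $M$. The claim will be that
\[
T := M \cup U
\]
is a perfect total matching of $G$.

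The verification splits into two parts: checking that $T$ is a total matching (pairwise independence) and checking that $T$ covers every vertex. For pairwise independence I would examine the three possible types of pairs. First, two edges in $M$ are non-incident by definition of a matching. Second, two vertices in $U$ are non-adjacent: if $u_1, u_2 \in U$ were adjacent, the edge $\{u_1,u_2\}$ could be added to $M$, contradicting maximality. Third, a vertex $u \in U$ and an edge $e \in M$ are non-incident because both endpoints of $e$ are, by construction, covered by $M$, whereas $u$ is not; hence $u \notin e$. For the covering condition, every $v \in V$ is either matched by $M$, in which case it has an incident edge in $M \subseteq T$, or unmatched, in which case $v \in U \subseteq T$.

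There is no real obstacle here: existence of a maximal matching is immediate (take any matching and grow greedily), and the three independence checks are each one-line consequences of maximality. The only thing worth being careful about is distinguishing \emph{perfect matching} (a matching saturating every vertex, which need not exist) from \emph{perfect total matching} in the sense of this paper, where vertices left uncovered by edges can simply be included as vertex-elements of $T$.
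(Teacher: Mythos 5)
Your proof is correct, and it takes a genuinely different and substantially more elementary route than the paper. The paper first checks whether $G$ has a perfect matching and, in the negative case, invokes Tutte's theorem to decompose $G$ into odd components relative to a deficiency set $S$, constructs near-perfect matchings in each odd component, and then patches together matched edges, connector edges into $S$, and the left-out vertices. Your argument bypasses all of that: an inclusion-wise maximal matching $M$ together with its set $U$ of exposed vertices already does the job, and each of the three independence checks (edge--edge by the matching property, vertex--vertex by maximality of $M$, vertex--edge because exposed vertices meet no edge of $M$) is immediate. What your approach buys is brevity and the elimination of Tutte's theorem as a prerequisite; what the paper's approach buys, at most, is a total matching tied to a maximum (rather than merely maximal) matching of $G$, but that extra structure is not needed for the existence statement being proved. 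Your closing remark correctly identifies the one point where confusion could arise, namely that ``perfect total matching'' only requires every vertex to be covered, not saturated by an edge.
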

\begin{proof}
If $G$ has a perfect matching, it is trivial.
Otherwise, let us suppose that $G$ has no perfect matching.
Given a subset of vertices $S\subseteq V$, let $k$ be the number of odd components of $G$
that is, the number of maximal connected components of odd order.
We denote the odd components as $O_1,O_2, \dots, O_k$.
By applying the Tutte's theorem \cite{Tutte1954,Lovasz2009}, we have $k > |S|$.
Notice that, since the maximum size of a matching in an odd component is $\frac{|V(O_i)|-1}{2}$ for $i=1,2, \dots k$,
there is a vertex that is not covered by a matching, we call it a left-out vertex.
Instead, we have a perfect matching $N$ that covers all the vertices in the even components.
Now, let $T$ be a total matching of $G$.
We show how to construct $T$ so that all the vertices of $G$ are covered by $T$.
First, for each odd component we can construct a maximum matching $M_{i}$ of size $\frac{|V(O_i)|-1}{2}$ for every $i=1,2,\dots k$,
in which we choose as a left-out vertex one of the vertices connecting an odd component to $S$. 
Let $v_{i}$ be the left-out vertex by $M_{i}$ of the component $O_{i}$ for $i=1,2, \dots, k$.
Now, take one edge of $|S|$ odd components connecting $v_{i}$ to the set $S$
and consider the matching $S_{O}:=\{e = \{v_{i},s_{i}\} \mid s_{i} \in V(S)\,$ for $i=1,2, \dots, |S|\}$.
Since $k > |S|$, for each of the remaining components, we have a left-out vertex that cannot be covered by a matching $M_{i}$ and in particular, in order to form an independent set of elements, we cannot choose an edge connecting $S$ to the odd component.
Consider the set $L$ of these vertices and define $T := M_{1} \cup M_2 \dots \cup M_{k} \cup S_{O} \cup L \cup N$.
Since every vertex is covered by $T$ by construction, the assertion follows.
\qed
\end{proof}

The previous proposition allows us to define the Perfect Total Matching Polytope. Let $P_{PT}(G)$ be the convex hull of all perfect total matchings of $G$.
\begin{proposition}
Let $G$ be a graph. The following inequalities are valid for $P_{PT}(G)$.
\begin{align}
&x_{v} + \sum_{e \in \delta(v)} y_{e} = 1  &  \forall v \in V \\
 &x_{v} + x_{w} + y_{e} = 1   &\forall e=\{v,w\} \in E \\
 & x_{v},y_{e} \geq 0 & \forall v \in V, \forall e \in E.
\end{align}
\end{proposition}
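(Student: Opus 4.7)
The plan is to show that the characteristic vector $\chi[T]$ of every perfect total matching $T$ satisfies all three families; since $P_{PT}(G)$ is by definition the convex hull of these vectors, validity then lifts to the whole polytope by linearity of convex combinations. The nonnegativity constraints $x_v, y_e \geq 0$ are immediate, since the extreme points of $P_{PT}(G)$ are $\{0,1\}$-vectors.

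For the vertex equalities $x_v + \sum_{e \in \delta(v)} y_e = 1$, I would fix $v \in V$ and combine two observations. First, the pairwise-independence clause in the definition of a total matching immediately gives $x_v + \sum_{e \in \delta(v)} y_e \leq 1$; this is exactly the valid inequality \eqref{m5:c1} applied to $\chi[T]$, since two incident edges or an edge together with one of its endpoints cannot both lie in $T$. Second, the perfectness of $T$ is defined so that $v$ is covered, meaning at least one element of $\{v\} \cup \delta(v)$ belongs to $T$, which gives the $\geq 1$ direction. The equality then follows.

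For the edge equalities $x_v + x_w + y_e = 1$ with $e = \{v,w\}$, the direction $\leq 1$ is precisely the valid inequality \eqref{m5:c2}, and requires no new idea. I expect the matching lower bound to be the main obstacle: perfectness as defined speaks only of \emph{vertices} being covered, so nothing in the hypothesis rules out a configuration in which $v$ is covered by some edge $e' \in \delta(v) \setminus \{e\}$ and $w$ by some edge $e'' \in \delta(w) \setminus \{e\}$, leaving $x_v = x_w = y_e = 0$. A concrete potential obstruction is the path $P_4$ on $a$--$v$--$w$--$b$, where $T = \{\{a,v\}, \{w,b\}\}$ is a perfect total matching but gives $0$ on the middle edge.

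Consequently, in carrying out the proof I would either (i) read the third family as the inequality $x_v + x_w + y_e \leq 1$, in which case invoking \eqref{m5:c2} closes the argument at once and the whole proposition is just the restriction of \eqref{m5:c1}--\eqref{m5:c2} to the face of $P_T(G)$ where each vertex constraint is tight; or (ii) adopt a strengthened notion of ``perfect'' in which every \emph{element} of $D = V \cup E$ (not just every vertex) must be covered by $T$, and then argue by contradiction: if $x_v = x_w = y_e = 0$ for some $e = \{v,w\}$, then $e$ itself is uncovered, contradicting this stronger perfectness. Either resolution turns the statement into a direct consequence of the already-proven valid inequalities \eqref{b1:v}--\eqref{b1:a} together with the covering property built into the definition.
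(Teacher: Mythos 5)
Your treatment is more careful than the paper's own: the paper offers no proof of this proposition at all, only the one-line remark that ``for any perfect total matching, the inequalities describing the feasible region of total matchings are all tight.'' Your argument for the vertex equalities is exactly what that remark must mean and is correct --- the $\leq$ direction is inequality \eqref{m5:c1} applied to $\chi[T]$, the $\geq$ direction is the covering condition in the definition of perfectness --- and the nonnegativity constraints are immediate since the extreme points are $0$--$1$ vectors.

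Your suspicion about the edge equalities is also correct, and your $P_4$ example is a genuine counterexample to the statement as written: on the path $a$--$v$--$w$--$b$, the set $T=\{\{a,v\},\{w,b\}\}$ is a perfect total matching under the paper's definition (every vertex is covered), yet $x_v+x_w+y_e=0$ on the middle edge $e=\{v,w\}$. So the proposition, and the paper's remark, are false as stated, and one of your two repairs is needed. Repair (i) --- reading the second family as the inequality $x_v+x_w+y_e\leq 1$ --- is the safe one and reduces the whole proposition to \eqref{m5:c1}--\eqref{m5:c2} plus tightness of the vertex constraints. Repair (ii) needs a caveat: if ``$e$ is covered'' is taken in the natural total-domination sense (some element of $T$ adjacent to $e$), the same $P_4$ example survives, since the middle edge is adjacent to $\{a,v\}\in T$; the repair only works if coverage of $e$ is defined to mean that one of $v$, $w$, or $e$ itself lies in $T$, which is precisely tightness of \eqref{m5:c2}. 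In short, you have found an error in the paper rather than left a gap in your own argument.
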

\noindent In practice, for any perfect total matching, the inequalities describing the feasible region of total matchings are all tight.
\begin{figure}[!t]
		\centering
		\begin{tikzpicture}[scale=0.40]
		
		\coordinate (v') at (2*1.2-2.5+16,2*1.5-3.2) ;
		\coordinate (w') at (2*3.8-2.5+16,2*1.5-3.2) ;
		\coordinate (z') at (2*4.5-2.5+16,2*4-3.2) ;
		\coordinate (x') at (2*0.5-2.5+16,2*4-3.2) ;
		\coordinate  (y') at (2*2.5-2.5+16,2*5.5-3.2);
	     
	    \node at (2*1.2-2.5+16,2*1.5-3.2-1){$\frac{1}{3}$};
	    \node at (2*3.8-2.5+16,2*1.5-3.2-1){$\frac{1}{3}$};
	    \node at (2*4.5-2.5+16+1,2*4-3.2){$\frac{1}{3}$};
	    \node at (2*0.5-2.5+16-1,2*4-3.2){$\frac{1}{3}$};
        \node at (2*2.5-2.5+16,2*5.5-3.2+1){$\frac{1}{3}$};
        \node at (18.5,-1.2){$\frac{1}{3}$};
        \node at (18.5-3.8,2.5){$\frac{1}{3}$};
        \node at (18.5+3.8,2.5){$\frac{1}{3}$};
        \node at (18.5-2.1,7){$\frac{1}{3}$};
        \node at (18.5+2.2,7){$\frac{1}{3}$};
        
	    \draw[black,line width=0.03 cm] (x')--(v');
	    \draw[black,line width=0.03 cm] (w')--(z');
		\draw[black,line width=0.03 cm]  (x')--(y');
		\draw[black,line width=0.03 cm]  (v')--(w');
		\draw[black,line width=0.03 cm] (z')--(y');
		
		\draw[fill=black]  (v') circle [radius=0.30cm];
		\draw[fill=black] (w') circle [radius=0.30cm];
		\draw[fill=black] (z') circle [radius=0.30cm];
		\draw[fill=black] (y') circle [radius=0.30cm];
		\draw[fill=black] (x') circle [radius=0.30cm];
	
		\end{tikzpicture}
		
		\caption{A vertex $\bm{z}=\frac{1}{3}\mathbf{1}$ of the cycle $C_5$. \label{fig3}}
\end{figure}
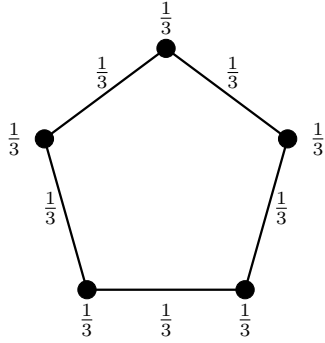
In the following section, we introduce nontrivial facet-defining inequalities for the Total Matching Polytope.
\section{Clique inequalities and Congruent-$2k3$ cycle inequalities}\label{sec:valid}  
In the previous section, we have proved that all the inequalities defining the feasible region of the TMP are facet-defining. 
In this section, we introduce three families of nontrivial valid inequalities.
First, notice that the result obtained by Padberg in \cite{Padberg1973} for a maximal clique on the intersection graph of a set packing problem can be extended to the Total Matching Polytope,
since the total graph can be interpreted as a special type of intersection graph.
In particular, Padberg shows that any maximal clique on the intersection graph induces a facet-defining inequality for the packing polytope.
Hence, any maximal clique on the total graph induces a facet-defining inequality for the Total Matching Polytope.
However, in the following, we propose direct proofs using only the original graph $G$ to show that the families of valid inequalities that we propose are facet-defining.
\begin{theorem}\label{PadbergMaxClique}
Let $G$ be a graph and let $K_h$ be a maximal clique of $G$. Then, the vertex-clique inequality
\begin{align}\label{vertex-clique}
    \sum_{v \in V(K_h)} x_v \leq 1
\end{align}
is facet-defining for $P_{T}(G)$.
\end{theorem}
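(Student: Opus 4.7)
The plan is to use full-dimensionality of $P_T(G)$ (Proposition \ref{fulldim}) and exhibit $n+m$ affinely independent total matchings on the hyperplane $\sum_{v \in V(K_h)} x_v = 1$, i.e., total matchings containing exactly one vertex of the clique. I will implicitly assume $h \geq 3$; for $h = 2$ the inequality is $x_{u_1} + x_{u_2} \leq 1$, which is dominated by the edge constraint $x_{u_1} + x_{u_2} + y_{e} \leq 1$ and hence not facet-defining.

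I would build the $n+m$ matchings in three families, one per coordinate block of the ambient space, namely $x_{V(K_h)}$, $x_{V \setminus V(K_h)}$ and $y_E$. First, for each $v \in V(K_h)$ take the singleton $T_v = \{v\}$, producing $h$ matchings. Second, for each $u \in V \setminus V(K_h)$, use maximality of $K_h$ to pick some $w_u \in V(K_h)$ non-adjacent to $u$, and set $T_u = \{u, w_u\}$, producing $n-h$ matchings. Third, for each edge $e = \{u_1, u_2\} \in E$ pick any $w_e \in V(K_h) \setminus \{u_1, u_2\}$, which exists because $h \geq 3$, and set $T_e = \{e, w_e\}$, producing $m$ matchings. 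Each of these $n+m$ sets is by construction a total matching (independence of $w_u$ from $u$ in the second family, and of $w_e$ from the endpoints of $e$ in the third) containing exactly one vertex of $V(K_h)$, so the inequality is tight on every characteristic vector.

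For affine independence I would fix $v_1 \in V(K_h)$ and study the $n+m-1$ differences $\chi[T] - \chi[T_{v_1}]$. The argument is block-triangular: the coordinate $x_u$, for $u \in V \setminus V(K_h)$, is nonzero in exactly one difference, namely the second-family vector indexed by $u$, forcing its coefficient in any vanishing combination to be zero; similarly each coordinate $y_e$ isolates a single third-family coefficient. What remains is a combination of the first-family differences $\chi[\{v\}] - \chi[\{v_1\}]$ for $v \in V(K_h) \setminus \{v_1\}$, whose linear independence is immediate, closing the argument.

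The main obstacle is the third family when $e \subseteq V(K_h)$: one needs a clique vertex independent from both endpoints of $e$, and this is precisely where the hypothesis $h \geq 3$ enters. Everything else is a routine reading off of coordinates, and the construction exploits only the definition of maximal clique, so no additional structural hypothesis on $G$ is required.
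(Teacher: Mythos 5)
Your proof is correct and follows the same high-level strategy as the paper's --- exhibiting $n+m$ affinely independent characteristic vectors on the face and reading off a block-triangular structure --- but your execution is more elementary and more uniform. The paper delegates the first $n$ points to Padberg's Theorem 2.4 on the stable set polytope and then covers the $m$ edge coordinates by a three-case construction (edges not incident to a fixed clique vertex $v$, edges of $\delta(v)$ leaving the clique, and clique edges), whereas you build all $n$ vertex points explicitly (singletons $\{v\}$ for clique vertices, and pairs $\{u, w_u\}$ with $w_u$ supplied by maximality for the remaining vertices) and handle every edge $e$ at once by pairing it with any clique vertex outside $e$. Your remark that this last family needs $h \geq 3$ is a genuine point the paper glosses over: for $h = 2$ the face $\{z \in P_T(G) \mid x_{u_1}+x_{u_2}=1\}$ forces $y_{\{u_1,u_2\}}=0$, so the inequality is dominated by the edge constraint and is not facet-defining, and the paper's own edge construction also breaks in that case. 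One very minor omission on your side: you never state why the inequality is valid in the first place (a total matching meets a clique in at most one vertex, since clique vertices are pairwise adjacent), though this is immediate.
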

\begin{proof}
Let $G$ be a graph and let $K_h \subseteq G$ be a maximal clique.
We have to exhibit $n+m$ affinely independent points which belong to the face $F:=\{z \in P_{T}(G) \mid \sum\limits_{v \in V(K_h)}x_v = 1\}$. 
We know that, since $K_h$ is maximal, by Theorem 2.4 in \cite{Padberg1973}, we can easily construct $n$ of such points belonging to $F$.
Now, fix a vertex $v \in V(K_h)$ and denote $N_{G}(v) \cap V(K_h):=\{v_0,v_1, \dots, v_{h-2}\}$ and $\delta(v) \cap E(K_h): = \{e_{i}=\{v, v_{i}\} \mid i=0,1, \dots, h-2\}$.
Define the total matching $T_{v}^{\overline{e}}:=\{v,\overline{e}\}$, where $\overline{e}\notin \delta(v)$
and notice that $\chi[T_{v_i}^{\overline{e}}] \in F$.
Thus, we can construct the set of 0-1 vectors $\{ \chi [T_{v}^{\overline{e}}] \mid $ $ \forall e \notin \delta(v) \} \subseteq F$.
It is easy to see that the corresponding characteristic vectors are affinely independent, so up to now we have found $|E \setminus \delta(v)|$.
%
%
Then, fix a vertex $w \in V(K_h)$ with $w \neq v$, and consider the set of total matchings $T_w^{e}:=\{\{w,e\} \mid \forall e \in \delta(V(K_h)) \cap \delta(v)$\}.
By construction, the set of the corresponding characteristic vectors of $T_w^{e}$ is contained in $F$.
Finally, the vectors $\chi[T_{v_i,e_{i+1}}]$ for $i=0,1,\dots, h-2 \mod h-1$ with one in entry $x_{v_i}$ and $y_{e_{i+1}}$ and zero the other components, are characteristic vectors lying on $F$ and are affinely independent, thus we have $|E|$ affinely independent points.
We have found in total $n+m$ affinely independent points, since the matrix having the columns the characteristic vectors found assumes the following form:
\begin{center}
$\left[
\begin{array}{c|c}
A_{v} & B \\ \hline
\mathbf{0} & I_{e} \\ 
\end{array}\right],
$
\end{center}
where $A_v$ represents the vertex components of the $n$ points and $B$ the characteristic vectors of total matchings relative to a fixed vertex in the clique and exactly one edge not belonging to the clique itself.
This completes the proof. \qed

\end{proof}

\subsection{Congruent-$2k3$ cycle inequalities}
The inequalities \eqref{m5:c1}--\eqref{m5:vr} define the feasible region of total matchings and they are facet-defining, but they do not describe the complete convex hull of the Total Matching Polytope. For instance, Figure \ref{fig3} shows that using only those inequalities, we have that for a cycle $C$ of length 5, the point $z_a = \frac{1}{3}$ for all $a \in V(C) \cup E(C)$ belongs to $P_T(C)$ and it is a vertex.
However, in \cite{Leidner2012}, the authors show that the cardinality of a maximum total matching in a cycle of cardinality $k \in \mathbb{N}$ is equal to $\floor{\frac{2k}{3}}$. 
Thus, we introduce an inequality that cuts off these nonintegral solutions for cycles, which we call the {\bf congruent-$2k3$ cycle inequality}.

\begin{proposition}\label{mainresult}
Let $C_k$ be an induced cycle. Then, if 
$k \equiv 1 \mod 3$ or $ k \equiv 2 \mod 3$, the {\bf congruent-$2k3$ cycle inequality} defined as
\begin{equation}\label{cycle}
    \sum\limits_{v \in V(C_k)}x_{v}+\sum\limits_{e \in E(C_k)}y_{e} \leq \left\lfloor \frac{2k}{3} \right\rfloor
\end{equation}
is facet-defining for $P_{T}(C_k)$.
\end{proposition}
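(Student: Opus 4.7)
The plan is to verify validity and then to exhibit $2k = n+m$ affinely independent maximum total matchings on the face
\[
F := \bigl\{ z \in P_T(C_k) : \sum\nolimits_{v} x_v + \sum\nolimits_{e} y_e = \lfloor 2k/3 \rfloor \bigr\}.
\]
Validity follows immediately from the result of Leidner cited just above, $\nu_T(C_k) = \lfloor 2k/3 \rfloor$. For facet-definedness I would invoke Proposition \ref{total graph}: after labelling the $2k$ elements of $C_k$ cyclically as $a_0, a_1, \ldots, a_{2k-1}$ with $a_{2i} = v_i$ and $a_{2i+1} = e_i$, the total graph of $C_k$ is the graph on these vertices in which two are adjacent iff their cyclic positions differ by $1$ or $2$. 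Hence a total matching corresponds to a subset of the $a_l$ whose pairwise cyclic distances are all at least $3$.

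Let $s := \lfloor 2k/3 \rfloor$, take the base matching $M_0 := \{a_{3l} : l = 0, \ldots, s-1\}$, and let $M_j$ denote its cyclic shift by $j$ for $j = 0, 1, \ldots, 2k-1$. A direct check confirms $|M_0| = s$ and that the wrap-around gap is $5$ when $k = 3q+1$ and $4$ when $k = 3q+2$, so every $M_j$ is a maximum total matching. Affine independence of the characteristic vectors $\chi[M_j]$ reduces to the rank of a circulant matrix whose eigenvalues are the discrete Fourier coefficients $\hat{c}(r) = \sum_{l=0}^{s-1} \omega^{3rl}$ with $\omega := e^{2\pi i / (2k)}$. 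When $k \equiv 2 \pmod{3}$, a short computation shows $\hat{c}(r) \neq 0$ for every $r$, so the $2k$ cyclic shifts are linearly (and hence affinely) independent in $\mathbb{R}^{2k}$, finishing the proof in this case.

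When $k \equiv 1 \pmod{3}$, exactly one Fourier coefficient vanishes, at $r = k$: the circulant has rank $2k-1$, so the cyclic shifts span an affine subspace of $F$ of dimension $2k-2$, one short of what is required. The missing linear direction is the alternating vector $((-1)^l)_l$, corresponding to the identity $\sum_{a_l \in M_j} (-1)^l = 0$ valid for every $j$. To close the gap I would exhibit one additional maximum total matching whose gap profile is $(4, 3, 3, \ldots, 3, 4)$ rather than the base profile $(3, 3, \ldots, 3, 5)$, concretely $M^{*} := \{a_0, a_4, a_7, a_{10}, \ldots, a_{6q-2}\}$ for $k = 3q+1$. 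A quick count gives $\sum_{a_l \in M^{*}} (-1)^l = 2 \neq 0$, so $M^{*}$ sits outside the affine hull of the cyclic shifts and the combined affine dimension climbs to $2k-1$; extracting any $2k$ affinely independent points from this $(2k+1)$-element collection concludes the argument.

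The step I expect to be the main obstacle is the case $k \equiv 1 \pmod{3}$: one must both show that the cyclic-shift construction is degenerate by exactly one dimension and pinpoint a well-chosen extra maximum total matching whose characteristic vector activates the sole surviving Fourier mode. The other ingredients -- validity, the Fourier computation for $k \equiv 2 \pmod{3}$, and the final extraction of $2k$ affinely independent points -- are routine.
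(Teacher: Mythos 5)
Your proposal is correct, but it proves the statement by a genuinely different route than the paper. The paper uses the indirect method: it assumes the face $\tilde F$ induced by \eqref{cycle} is contained in a facet $F = \{z : \lambda^T z = \lambda_0\}$, builds one explicit maximum total matching $T_0$ together with two local perturbations $T_0^{\pm}$ (swapping an edge for an adjacent vertex or edge), and then propagates the resulting coefficient equalities around the cycle with the shift map $\sigma$ to conclude $\lambda = a\mathbf{1}$; the two congruence classes of $k$ only change the shape of $T_0$ and of the perturbations. You instead use the direct method, exhibiting $2k$ affinely independent points of the face: you pass to the total graph (the circulant/antiweb $\overline{W}(2k,3)$, which the paper only mentions in the remark crediting Trotter), take the $2k$ cyclic shifts of one maximum stable set, and read off affine independence from the nonvanishing of the discrete Fourier coefficients of the circulant. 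Your eigenvalue computation is right: for $k\equiv 2\pmod 3$ one has $3s\equiv -1 \pmod{2k}$ so no coefficient vanishes, while for $k\equiv 1\pmod 3$ exactly the mode $r=k$ (the alternating vector, i.e.\ the relation $|T\cap V|=|T\cap E|$ common to all shifts) dies, and your extra matching $M^*$ with gap profile $(4,3,\dots,3,4)$ has $|M^*\cap V|-|M^*\cap E|=2$, so it escapes that hyperplane and restores the missing dimension. What each approach buys: your circulant viewpoint makes structurally transparent \emph{why} the construction degenerates when $3\mid k$ (then $3s\equiv 0$ and the rank collapses), and it essentially recovers Trotter's gcd condition $\gcd(2k,3)=1$; the paper's proof avoids the detour through the total graph, which the authors explicitly prefer because the total graph loses the distinction between vertex and edge variables, and its local-swap technique is the template reused for the even-clique facets later in the paper.
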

\begin{proof}
Let $F :=\{z \in P_{T}(C_k) \mid \lambda^{T}z=\lambda_{0} \}$ be a facet of $P_{T}(C_k)$ such that $\tilde{F} := \{z \in P_{T}(C_k) \mid \tilde{\lambda}^{T}z=\tilde{\lambda_{0}} \} \subseteq F$ where the inequality $\tilde{\lambda}^{T}z \leq \tilde{\lambda_{0}}$ corresponds to the inequality \eqref{cycle}. 
We want to prove that there exists $a \in \mathbb{R}$ such that $\lambda = a\tilde{\lambda}$ and $ \lambda_{0}=a\tilde{\lambda_{0}}$. 
We distinguish two cases based on the parity of the cycle. 
We label the vertices $V(C_k):=\{v_{0},\dots,v_{k-1}\}$, so that $v_{i}$ is adjacent to $v_{i-1}$ for $i=0,1,\dots,k-1 \mod k$ 
, and the edges $E(C_k):=\{e_{0},\dots,e_{k-1}\}$, so that $e_{i} = \{v_{i}, v_{i+1}\}$ for $i=0,1,\dots,k-1 \mod k$. 

\textbf{Case 1: $(k \equiv 1 \mod 3)$}. Consider the total matching $T_{0} := \{v_{i}, e_{i+1} \mid 0 \leq i \leq k-4,$ for $i \equiv 0 \mod 3 \}$. 
This is a maximal total matching, since every element in $T_0$ is mutually nonadjacent.
The number of elements of $T_{0}$ is twice the numbers of integers $i$ satisfying the condition, that is, $|T_0| = \frac{2(k-1)}{3}$, and, hence, $\chi[T_{0}] \in \tilde{F}$ and, in particular, $\chi[T_{0}] \in F$.
%
%
%
Note that the set $\{v_{k-2},e_{k-2}\}$ is not contained in $T_0$, because of our description of $T_{0}$.
%
Now, consider the total matchings $T_{0}^{-}:=(T_{0} \setminus \{e_{k-3}\}) \cup \{v_{k-2}\}$ and $T_{0}^{+}:=(T_{0} \setminus \{e_{k-3}\}) \cup \{e_{k-2}\}$.
In this way, we obtain two distinct total matchings with the same cardinality, whose characteristic vectors belong to $\tilde{F}$. 
%
%
Since $\chi[T_{0}^{+}]\in F$ and $\chi[T_{0}^{-}] \in F$, then $\lambda^{T}\chi[T_{0}]= \lambda^{T}\chi[T_{0}^{+}]$ and $\lambda^{T}\chi[T_{0}]=\lambda^{T}\chi[T_{0}^{-}]$, thus $\lambda_{e_{k-3}}=\lambda_{v_{k-2}}=\lambda_{e_{k-2}}$, 
where $\lambda_{v_{i}}$ is the cost coefficient corresponding to the vertex $v_{i}$ and $\lambda_{e_{i}}$ is the coefficient relative to the edge $e_i=\{v_{i},v_{i+1}\}$.
Now, consider the function $\sigma:C \longrightarrow C$ such that $\sigma(v_{i})=v_{i+1}$ and $\sigma(e_{i})=e_{i+1}$.
Indeed, $\sigma$ shifts every element to the next position with respect to the ordering of the vertices and the edges. 
Composing $k-1$ times the shifting function on $T_{0}$, we obtain the following total matchings $\sigma(T_{0}),\sigma^{2}(T_{0}),\dots,\sigma^{k-1}(T_{0})$.
For a fixed $i$, denote $\sigma^{i}(T_{0}):=T_{i}$. 
These are still total matchings and each characteristic vector $\chi[T_{i}]  \in \tilde{F}$. 
%
Notice also that $T_{i}$ does not contain $\{v_{i-2},e_{i-2}\}$, for $i=1$ the corresponding set is $\{v_{k-1},e_{k-1}\}$. 
So, following the same previous procedure, we deduce that $\lambda_{e_{i-3}}=\lambda_{v_{i-2}}=\lambda_{e_{i-2}}$
for all $i =1,\dots,k-1$ mod $k$. 
This implies that there exists $a \in \mathbb{R}$ such that $\lambda = a\mathbf{1}$.
Then, since $\chi[T_i] \in \tilde{F}$, we have that $\lambda^{T}\chi[T_i] = a(\mathbf{1}^{T}\chi[T_i])=a\tilde{\lambda_0} $.
We conclude that, since $(\lambda,\lambda_0)= a(\mathbf{1},\tilde{\lambda_{0}})$,
$\lambda^{T}z  \leq \lambda_{0}$ is a scalar multiple of the cycle inequality.
\\
\textbf{Case 2: $(k \equiv 2 \mod 3$)}. Consider the total matching $T_{0} :=\{v_{i},e_{i+1}\mid 0 \leq i \leq k-5,$ for $i \equiv 0 \mod 3 \} \cup \{v_{k-2}\}$. 
Notice that now $e_{k-2} \notin T_{0}$.
Also in this case $\chi[T_{0}] \in \tilde{F}$, since $|T_{0}|=\frac{2(k-2)}{3}+1=\floor{\frac{2k}{3}}$.
We can construct other two total matchings with the same cardinality $\hat{T_{0}} :=(T_{0}\setminus \{v_{k-2}\}) \cup \{e_{k-2}\}$ and $\tilde{T_{0}} :=(\hat{T_{0}} \setminus \{e_{k-4}\}) \cup \{v_{k-3}\}$.
Note that $\chi[\hat{T_{0}}],\chi[\tilde{T_{0}}] \in \tilde{F}$, and so they also lie in $F$.
Thus, $\lambda^{T}\chi[\hat{T_{0}}]=\lambda^{T}\chi[T_{0}]$ and $\lambda^{T}\chi[\hat{T_{0}}]=\lambda^{T}\chi[\tilde{T_{0}}]$.
From the first equality, we deduce that $\lambda_{v_{k-2}}=\lambda_{e_{k-2}}$ and for the second one, $\lambda_{v_{k-3}}=\lambda_{e_{k-4}}$. 
We conclude as in the Case 1 by applying the shifting function $\sigma$, so we have a scalar multiple of the cycle inequality.\qed
\end{proof}
A different proof of this result was first given by Trotter in \cite{antiweb}\footnote{We thanks the anonymous reviewer for pointing this out}. 
Let $p$ and $q$ be two integers with $p \geq 2q+1$. 
Trotter introduced the web inequalities $\sum\limits_{i \in W} x_i \leq q$, where $W \subseteq V$ induces a web $W(p, q)$ of $G$,
that is, a subgraph with $p$ vertices $\{1, \dots , p\}$ with, adopting modulo $p$ arithmetic, an edge between every two
vertices $i$ and $j \in \{i+q, \dots, i-q\}$. 
In the same paper \cite{antiweb}, Trotter introduced the antiweb inequalities $\sum\limits_{i \in \overline{W}} x_i \leq \floor{\frac{p}{q}}$, 
where $\overline{W}(p,q)$ is the complement of $W(p,q)$, that is, a web $W(p,q)$ of $\overline{G}$.
By construction, it is possible to notice that the total graph $T(C_k)$ of a congruent-$2k3$ cycle is an antiweb $\overline{W}(p,3)$,
where $p = 2k$ with $k \in \mathbb{N}$, and $q = 3$.
If $G$ is itself an antiweb, and if $p$ and $q$ are relatively prime, then the antiweb inequalities are facet-defining.
In our case, $p = 2k$ and $q=3$, thus $gcd(p,q)=1$ if and only if $k \equiv 1,2 \mod 3$.
We give a direct proof of Proposition (\ref{mainresult}), without using the total graph, 
since in the total graph we have a loss in structure, in the sense that we cannot any longer distinguish among vertices and edges of the original graph $G$.
Using another direct proof, we proceed in proving the following important observation.

%
%
%
\begin{proposition}\label{cycle4}
Let $G$ be a graph and let $C_4$ be the induced cycle of four vertices. Then, the inequality:
\begin{align}\label{kappaFour}
    \sum\limits_{v \in V(C_4)}x_v + \sum\limits_{e \in E(C_4)}y_e \leq 2
\end{align}
is facet-defining for $P_{T}(G)$.
\end{proposition}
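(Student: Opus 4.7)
The plan is to exhibit $n+m$ affinely independent characteristic vectors of total matchings of $G$ that satisfy inequality \eqref{kappaFour} with equality. Proposition \ref{mainresult} applied with $k=4$ (note $4\equiv 1 \mod 3$) gives that \eqref{kappaFour} is facet-defining for $P_{T}(C_4)$, so there exist eight tight total matchings $T_1,\dots,T_8$ of $C_4$ whose characteristic vectors, viewed in $\mathbb{R}^{V(C_4)\cup E(C_4)}\cong\mathbb{R}^8$, are affinely independent. Because $C_4$ is an induced subgraph of $G$, adjacency between any two $C_4$-elements is identical in $G$ and in $C_4$, so each such $T_i$ is also a total matching of $G$; padding the eight vectors with zeros yields eight affinely independent tight points of $P_T(G)$. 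It then remains to produce one further tight characteristic vector for each element of $G$ lying outside $C_4$, in a way that preserves the overall affine independence count.

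Label $V(C_4)=\{v_0,v_1,v_2,v_3\}$ cyclically with $e_i=\{v_i,v_{i+1\bmod 4}\}$, and fix the three tight total matchings $B^{(1)}=\{v_0,v_2\}$, $B^{(2)}=\{v_1,v_3\}$, $B^{(3)}=\{e_0,e_2\}$ of $C_4$. For each $a\in(V\cup E)\setminus(V(C_4)\cup E(C_4))$ I build a tight total matching $\tilde T_a$ of $G$ by adding only the single element $a$ to one of these bases: if $a$ is a vertex, extend $B^{(3)}$, which is valid because $a\notin V(C_4)$ implies $a$ is not an endpoint of $e_0$ or $e_2$; if $a=f$ is an edge with no endpoint in $V(C_4)$, extend $B^{(1)}$; if $f$ has exactly one endpoint $v_i\in V(C_4)$, extend $B^{(2)}$ when $v_i\in\{v_0,v_2\}$ and $B^{(1)}$ otherwise. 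The case of $f$ having two endpoints in $V(C_4)$ cannot occur, since $C_4$ is induced and $f\notin E(C_4)$. A short adjacency check in each case confirms that $\tilde T_a$ is a total matching whose characteristic vector carries exactly two $1$'s on $V(C_4)\cup E(C_4)$ (hence is tight on \eqref{kappaFour}), a single $1$ at coordinate $a$, and $0$ on every other non-$C_4$ coordinate.

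It remains to verify affine independence of the resulting $n+m$ points. Ordering the columns as \emph{non-$C_4$} followed by \emph{$C_4$}, and listing the rows $\chi[\tilde T_a]$ before $\chi[T_1],\dots,\chi[T_8]$, the combined matrix has block form $\bigl(\begin{smallmatrix} I_{n+m-8} & D \\ 0 & A \end{smallmatrix}\bigr)$, where the identity block records the fresh outside-cycle element contributed by each extension and $A$ is the $8\times 8$ matrix whose rows are $\chi[T_1],\dots,\chi[T_8]$ restricted to $V(C_4)\cup E(C_4)$. Subtracting $\chi[T_1]$ from each of the remaining $n+m-1$ rows and swapping the two row blocks yields a block upper-triangular difference matrix of linear rank $(n+m-8)+\mathrm{rank}(C)$, where $C$ is the $7\times 8$ matrix with rows $\chi[T_i]-\chi[T_1]$, $i=2,\dots,8$. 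The affine independence of $T_1,\dots,T_8$ secured in the first paragraph is precisely the statement $\mathrm{rank}(C)=7$, so the total rank equals $n+m-1$, which is the affine dimension needed for \eqref{kappaFour} to define a facet of $P_T(G)$. The main obstacle is the edge-extension step: when $f$ has an endpoint on the cycle, only some of the natural bases are compatible with $f$, and the induced hypothesis is exactly what rules out the single bad case (a chord of $C_4$) which would leave no admissible base.
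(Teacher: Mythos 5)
Your proof is correct. It differs from the paper's in technique but not in substance: the paper proves the same statement by the \emph{indirect} method, assuming the face $\tilde F$ induced by \eqref{kappaFour} is contained in a facet $F=\{z\mid \lambda^{T}z=\lambda_0\}$, invoking Proposition \ref{mainresult} to force $\lambda$ to be constant equal to some $a$ on the coordinates of $C_4$, and then using tight total matchings of the form $M\cup\{u\}$, $M\cup\{e\}$ (with $M=\{e_{0,1},e_{2,3}\}$) and $\{e_{i+1,i+2},v_{i+3}\}\cup\{e_{v_i}\}$ to conclude $\lambda_l=0$ for every element $l$ outside $C_4$. You instead run the \emph{direct} method: Proposition \ref{mainresult} supplies the $8$ affinely independent tight points supported on $C_4$, and your extensions of the two-element bases $\{v_0,v_2\}$, $\{v_1,v_3\}$, $\{e_0,e_2\}$ by a single outside element play exactly the role of the paper's auxiliary matchings, with the block-triangular rank computation replacing the coefficient-chasing. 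The two arguments use the same key lemma and essentially the same combinatorial constructions (your case split on whether the outside edge meets $\{v_0,v_2\}$ or $\{v_1,v_3\}$ mirrors the paper's treatment of $\delta(V(C_4))$, and both rely on $C_4$ being induced to exclude chords); what your version buys is an explicit certificate of $n+m$ affinely independent points, at the cost of the bookkeeping in the final rank argument, whereas the paper's version is shorter once Proposition \ref{mainresult} is taken as pinning down the cycle coefficients.
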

\begin{proof}
Denote by $\tilde{F}$ the face induced by the inequality \eqref{kappaFour}.
Suppose by contradiction that $\Tilde{F}$ is contained in $ F:=\{ z \in P_{T}(G) \mid \lambda^{T}z = \lambda_0\}$.
By proposition \eqref{mainresult}, the corresponding inequality inducing the face $F$ has the form $a(\sum\limits_{v \in V(C_4)}x_v + \sum\limits_{e \in E(C_4)} y_e) + \sum\limits_{l \notin C_4}\lambda_{l}^{T}z_l \leq 2a$ for $a \in \mathbb{R}$.
Denote as $V(C_4):=\{v_0,v_1,v_2,v_3\}$ and $E(C_4):=\{e_{i,i+1}=\{v_i,v_{i+1}\} \mid i=0,1,2,3 \mod 4\}$.
Consider the matching $M:=\{e_{0,1},e_{2,3}\}$, then the corresponding characteristic vector lies on $\tilde{F}.$
Since $M \cap \{u\} = \emptyset$ for every $u \notin V(C_4)$, $T_u:= M \cup \{u\}$ is a total matching whose characteristic vector lies on $\tilde{F}$.
This implies that $\lambda_{u}=0$ for every $u \notin V(C_4)$.
Similarly, $M \cap \{e\} = \emptyset$ for every $e \notin \delta(V(C_4)) \cup E(C_4)$, so $T_e:= M \cup \{e\}$ is a total matching whose characteristic vector lies on $\tilde{F}$.
This implies that $\lambda_{e}=0$ for every $e \notin \delta(V(C_4)) \cup E(C_4)$.
Now, let $S:=\{v \in V(C_4) \mid \delta(V(C_4)) \neq \emptyset \}$.
Fix a vertex $v_i \in S$, and consider the total matching $T_{v_i}:= \{e_{i+1,i+2},v_{i+3}\} \cup \{e_{v_{i}}\}$, where $i$ is taken modulo $4$ and $e_{v_{i}} \in \delta(V(K_h)) \cap \delta(v_i)$.
It is easy to see that the characteristic vector of $T_{v_i}$ lies on $\tilde{F}$, in particular exactly one edge $e \in \delta(v_i) \cap \delta(V(C_4))$ is chosen, so $\lambda_{e_{v_i}}=0$ for every $e_{v_{i}} \in \delta(v_i) \cap \delta(V(C_4))$.
In this way, repeating the same argument for all $v \in S$, we obtain that $\lambda_{e_{v_k}} =0$ for every $e_{v_k} \in \delta(V(K_h))$.
This completes the proof since we have proved that $\lambda_{l}=0$ for all $l \notin C_4$. \qed

\end{proof}
\paragraph{Separation of congruent-2$k$3 cycle inequalities} 
In this paragraph, we deal with the problem of separating the facet-defining inequalities given by the class of the congruent-$2k3$ cycle inequalities.
Given a fractional optimal solution of the LP relaxation of the pricing subproblem, the separation for the congruent-$2k3$ cycle inequalities consists of either finding an inequality in this class that is violated by a cycle inequality or proving that all inequalities are satisfied.
To this end, we propose an Integer Linear Programming formulation for solving this separation problem.

Let $(c_v,w_e)$ denote the fractional optimal solution to the current LP problem, and let $x_v$ and $y_e$ denote the decision variables of the problem of finding a congruent-$2k3$ cycle in a graph $G$.
The separation problem consists of maximizing the following value
\begin{equation}\label{eq:floor}
 \alpha := \sum\limits_{v \in V}c_v x_v + \sum\limits_{e \in E}w_e y_e - \floor{\frac{2k}{3}},
\end{equation}
\noindent where $k$ is the cardinality of the cycle induced by the variables $x_v$ and $y_e$.
Thus, we want to detect a maximum weighted cycle, where node and edge weights are  $(c_v,w_e)$, and the cycle contains a number of nodes that is not a multiple of three. 
Whenever $\alpha > 0$, we have a violated cycle. 
Otherwise all the congruent-$2k3$ cycle inequalities are satisfied.
Since $k \equiv 1,2 \mod 3$, we can express $k = 3z + t$ where $z \in \mathbb{Z}$ and $t \in \{1,2\}$, and we can rewrite the floor expression in \eqref{eq:floor} as follows
\begin{equation*}
    \floor{\frac{2k}{3}} = \floor{\frac{2(3z+t)}{3}} =
    \left\{
    \begin{array}{ll}
        2z & \mbox{ if } t=1 \\
        2z + 1 &  \mbox{ if } t=2,
    \end{array}\right.
\end{equation*}
\noindent and, hence, we get
\begin{equation}\label{floor-rewritten}
    \floor{\frac{2k}{3}} = 2z + t - 1.
\end{equation}
Another important element of our ILP model for the separation of congruent-$2k3$ cycle inequalities is the connectivity constraints, which we formulate exploiting the ideas presented in \cite{Maksimovic}, by setting a network flow model.
Given the original graph $G=(V,E)$ the flow networks is defined as $H=(V,A)$, where $A:=\bigcup_{\{i,j\}\in E}\{(i,j), (j,i)\}$.
The network $H$ has a single source node that introduces all the flow, while every node that belongs to the cycle is the sink of a single unit of flow.
However, we do not fix in advance the source node, and we let variables $s_i \in \{0,1\}$ for $i = 1, \dots, n $ to indicate which node of $H$ is the source.
Then, we introduce the variables $u_i \in \mathbb{Z}_+$ for every vertex $v_i \in V$ to indicate the overall amount of flow originated at the only source node $i$ having $s_i=1$. Indeed, we have that $u_i > 0$ only for the sink node.
The complete ILP model for the separation of congruent-$2k3$ cycle inequalities is the following:
\begin{align}
\label{m3:obj}  \max \quad & \sum_{v \in V}c_v x_v +  \sum_{e \in E} w_e y_e  - (2z + t - 1) & \\
\label{m3:c1}    \mbox{s.t.} \quad 
    & \sum_{e \in \delta(v)} y_{e} = 2x_v & \forall v \in V \\
\label{m3:c3}   & \sum_{v \in V} x_v =3z+t & \forall v \in V,\forall e \in E \\
\label{m3:c9} & 
x_{i} + \sum_{(i,j) \in A} f_{ij} = u_{i} + \sum_{(j,i) \in A} f_{ji}  
& \forall i \in V \\
\label{m3:c6} & \sum\limits_{i=1}^{n} s_i = 1 & \forall i \in V \\
\label{m3:c7} & u_{i} \leq n \cdot s_{i} & \forall i \in V \\
\label{m3:c8} &  f_{ij} \leq n \cdot y_{e} & \forall (i,j) \in A \\
\label{m3:c10}  & y_e,x_v \in \{0,1\} & \forall v \in V, \forall e \in E \\
\label{m3:c14} & u_i \in \mathbb{Z}_+ & \forall i \in V \\
\label{m3:c11} & z \in \mathbb{Z}_+, t \in \{1,2\}.
\end{align}
The objective function \eqref{m3:obj} includes the relation specified in \eqref{floor-rewritten}.
Constraints \eqref{m3:c1} ensure that the subgraph induced by the variables $x_v$ and $y_e$ is a union of disjoint cycles, since every node has either degree zero or two.
Constraints \eqref{m3:c3} impose the congruence on the length of the cycle, which cannot be a multiple of three.
Constraints \eqref{m3:c9} impose the flow conservation at every node, and constraints \eqref{m3:c6} impose that a single vertex is the origin of the flow.
Constraints \eqref{m3:c7} impose that all the vertices but the source have $u_i=0$, that is, they do not originate any unit of flow.
For every flow variable $f_{ij}$, constraints \eqref{m3:c8} set the capacity of the flow variables to zero whenever $y_e=0$, that is, whenever arc $e$ is not included in the cycle. 

The ILP model \eqref{m3:obj}--\eqref{m3:c11} permits us to look for the most violated congruent-$2k3$ cycle inequality by solving a single problem. 
Alternatively, we could solve a simplified version of the separation problem by fixing in advance both the source node $s_i$ and the value of variable $t$.
In this way, to find the most violated inequality, we have to solve two (easier) subproblems for every node, for a total of $6n$ subproblems.
However, each subproblem reduces to a Shortest Path Problem defined on an auxiliary directed graph having nonnegative weights, as shown in the proof of the following proposition.
\begin{proposition}
The separation problem of the congruent-$2k3$ cycle inequality is in $P$.
\end{proposition}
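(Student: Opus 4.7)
The plan is to reduce the separation problem to a polynomial number of shortest-path computations on an auxiliary directed graph with nonnegative arc weights, each of which can be solved by Dijkstra's algorithm.

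First, I would fix the source vertex $s \in V$ (setting $s_s = 1$ in the ILP model) and the target residue $t \in \{1,2\}$ of the cycle length modulo $3$, yielding at most a linear number of subproblems of the form ``find the maximum-weight cycle through $s$ of length $k \equiv t \pmod 3$.'' For each such subproblem, I build a three-layer auxiliary digraph $D_{s,t}$ on the vertex set $V \times \{0,1,2\}$, where the second coordinate tracks the current walk length modulo $3$. For every edge $e = \{u,v\} \in E$ and every layer $i \in \{0,1,2\}$, I introduce the arcs $(u,i) \to (v,(i+1) \bmod 3)$ and $(v,i) \to (u,(i+1) \bmod 3)$, with weights $1 - w_e - c_v$ and $1 - w_e - c_u$, respectively. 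These weights are nonnegative because the fractional pricing solution $(c_v, w_e)$ satisfies the vertex total-matching constraint $c_v + \sum_{e \in \delta(v)} w_e \leq 1$ from \eqref{m5:c1}, which implies $c_v + w_e \leq 1$ for every $e \in \delta(v)$.

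Next, I would observe that every $(s,0)$-$(s,t)$ path using $k$ arcs in $D_{s,t}$ projects to a closed walk in $G$ of length $k \equiv t \pmod 3$ starting and ending at $s$, and its total weight telescopes as $k - \sum_{v} c_v - \sum_{e} w_e$, summed along the walk. Consequently, minimizing the path weight in $D_{s,t}$ is equivalent to maximizing $\sum c_v + \sum w_e$ over closed walks of the prescribed residue. A single Dijkstra call on $D_{s,t}$ (which has $3n$ vertices and $6m$ arcs) yields the cheapest such path in $O(m + n\log n)$ time; I then compute $\Phi = \sum c_v + \sum w_e - \lfloor 2k/3 \rfloor$ for its length $k$ and report a violated inequality whenever $\Phi > 0$.

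The main obstacle is the walk-versus-cycle gap: an optimal path in $D_{s,t}$ is simple inside $D_{s,t}$, but its projection may visit an original vertex up to three times (once per layer), so it need not yield a simple cycle of $G$. I would handle this either by enumerating one additional parameter per source (for example, the first neighbor of $s$ on the walk, consistent with the $6n$ subproblem count mentioned in the preceding discussion), which forces the extracted walk to close into a simple cycle, or by invoking a standard decomposition argument showing that any violated walk-type inequality contains a violated simple-cycle inequality. In either case, the overall running time stays polynomial in $n$ and $m$, establishing that the separation of the congruent-$2k3$ cycle inequalities lies in $P$.
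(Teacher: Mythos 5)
Your proposal is essentially the paper's own argument: both constructions build a three-layer auxiliary digraph indexed by walk length modulo $3$, assign nonnegative arc costs of the form $(\text{constant}) - c_v - w_e$ whose nonnegativity follows from feasibility of the total-matching constraints, and reduce the separation to a linear number of shortest-path computations whose total cost telescopes to a constant times the path length minus $\sum c_v + \sum w_e$, so that violation is detected by a simple threshold test. The one substantive difference is that you explicitly flag the walk-versus-simple-cycle gap (which the paper's proof silently elides by asserting that the shortest path "corresponds to a cycle $C_k$"); your two proposed repairs are only sketched, but this does not change the fact that your route coincides with the published one.
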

\begin{proof}
The separation problem consists of a sequence of $2n$ Minimum Weighted $s,t$-Path problems from a source node $s$ to the target node $t$ of an auxiliary graph. 
Let $G=(V,E)$ be a weighted graph where $(c_v,w_e)$ are the optimal values of the current LP relaxation.
Starting from $G=(V,E)$, we construct a weighted directed graph $H=(N,A)$ in the following way.
For every vertex $v \in V$, we introduce three nodes labelled as $v_0,v_1,v_2$ in $N$.
Now, for each edge $e=\{v,w\} \in E$, we introduce three arcs $a_i \in A$ 
with respect to the permutation $\sigma = (0 1 2)$, that is, 
$a_i = (v_{i},w_{\sigma(i)})$, with $i=0,1,2$. 
%
%
Observe that a path from $v_0$ to $v_1$ gives a path $P_k$ of size $k \equiv 1\mod 3$, and 
a path from $v_0$ to $v_2$ gives a path $P_k$ of size $k \equiv 2\mod 3$.
Next, we distinguish the two cases.
%
%
%
\paragraph{{\bf Case 1:} $(k \equiv 1 \mod 3)$} 
In this case, we have $\floor{\frac{2k}{3}}=\frac{2(k-1)}{3}$, and the separation problem reads as follows:
\begin{align*}
     \exists C_k \ :\ \frac{2}{3}|C_k|-\sum\limits_{v \in V(C_k)}{c_vx_v}-\sum\limits_{e \in E(C_k)}{w_ex_e} < \frac{2}{3}.
\end{align*}
Since we look for the most violated inequality, for each node $v\in V$, the separation problem is equivalent to a Minimum Weighted $s,t$-Path Problem where the source is $v_0$ and the target is $v_1$.
Now, we define the costs on the arcs as $l_{a=(i,j)}:=\frac{2}{3} - c_i -w_{e=\{i,j\}} + 1$, for every $a=(i,j) \in A$. 
We know that $c_i + c_j + w_{e=\{i,j\}} \leq 1 $ due to feasibility of constraints \eqref{m5:c2} and, hence, the costs are positive.
Let $P_1$ be a minimum weighted path in $H$ from $v_0$ to $v_1$.
By construction, the path $P_1$ in $H$ corresponds to a cycle $C_k$ in $G$ of length $k \equiv 1 \mod 3$, where for each node $v_i \in N$ we consider the corresponding node $v \in V$.
If we sum up all the costs on the path $P_1$, we obtain:
\begin{equation}
    l(P_1) := \sum_{(i,j) \in A(P_1)} l_{(i,j)} 
    = \frac{2}{3}|P_1| - \sum_{i \in V(C_k)}{c_{i}-\sum_{e \in E(C_k)} w_e } + |P_1|.
\end{equation}
Hence, the path $P_1$ yields a violated congruent-$2k3$ cycle $C_{k}$ in $G$ if and only if $l(P_1) - |P_1| <\frac{2}{3}$.

\paragraph{{\bf Case 2:} $(k \equiv 2 \mod 3)$} 
In this case, we have $\floor{\frac{2k}{3}}=\frac{2(k-2)}{3}$, and the separation problem reads as follows:
%
\begin{align*}
     \exists C_k \ :\ \frac{2}{3}|C_k|-\sum\limits_{v \in V(C_k)}{c_vx_v}-\sum\limits_{e \in E(C_k)}{w_ex_e} < \frac{4}{3}.
\end{align*}
Hence, we have to find a minimum weighted path $P_2$ from $v_0$ to $v_2$ for each node $v$ in $V$.
We define the arc costs $l_a$ as before, and we get a maximum violated cycle 
if and only if $l(P_2) - |P_2| < \frac{4}{3}$.

In conclusion, by solving $2n$ shortest path problems on a directed graph with positive weights, we get the most violated congruent-$2k3$ cycle inequalities in polynomial time.
\qed
\end{proof}
We recall the fact that the separation of antiweb inequalities is in $P$ was proved in \cite{Vries}.
In our case we have a generalization of the odd cycle inequalities by paths in tripartite graphs, instead of bipartite graphs, see \cite{Nemhauser1991}.


\subsection{Even and odd clique inequalities}
At this point, we focus on valid inequalities that can be derived by complete subgraphs $K_h$ of $G$, with $h \leq n$. 
This leads to consider the following valid inequality.
\begin{proposition}
Let $G$ be a graph, and let $K_{h}$ a clique of order $h \leq n$ of $G$. 
Then,
\begin{equation}\label{clique}
\begin{aligned}
\sum\limits_{v \in V(K_h)}{x_{v}} + \sum\limits_{e \in E(K_h)}{y_{e}} \leq \ceil{\frac{h}{2}}
\end{aligned}
\end{equation}
is a valid inequality for $P_{T}(G)$.
\end{proposition}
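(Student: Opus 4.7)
The plan is to verify the inequality vertex by vertex of $P_{T}(G)$: since the polytope is the convex hull of characteristic vectors of total matchings, it suffices to show that every total matching $T$ of $G$ satisfies $|T\cap V(K_h)|+|T\cap E(K_h)|\le\lceil h/2\rceil$. I would fix an arbitrary total matching $T$, set $V_T:=T\cap V(K_h)$ and $E_T:=T\cap E(K_h)$, and exploit the clique structure to control these two sets separately.

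The first step is the key observation that all pairs of vertices in $V(K_h)$ are adjacent in $G$, so any two vertices of $T$ lying in $V(K_h)$ would violate the independence condition of a total matching. Hence $|V_T|\le 1$. The proof then splits into two short cases according to whether $V_T=\emptyset$ or $V_T=\{v\}$ for some $v\in V(K_h)$.

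In the first case, $E_T$ is a set of pairwise non-incident edges inside $K_h$, i.e.\ a matching of $K_h$, whose cardinality is at most $\lfloor h/2\rfloor\le\lceil h/2\rceil$. In the second case, every edge $e\in E_T$ must avoid $v$ (otherwise $v$ and $e$ are adjacent elements of $T$), so the edges of $E_T$ form a matching in the subclique $K_h[V(K_h)\setminus\{v\}]\cong K_{h-1}$, giving $|E_T|\le\lfloor(h-1)/2\rfloor$. A short parity check then yields $1+\lfloor(h-1)/2\rfloor=\lceil h/2\rceil$ in both parities of $h$, so $|V_T|+|E_T|\le\lceil h/2\rceil$ in this case as well.

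There is no real obstacle here: the proof is entirely combinatorial and relies only on two elementary facts (that $K_h$ forbids two independent vertices, and that the maximum matching of $K_r$ has size $\lfloor r/2\rfloor$). The only thing to be careful about is the parity bookkeeping of the ceiling and floor when combining the bound $|V_T|\le 1$ with the matching bound in $K_{h-1}$, which is why it is cleanest to treat the cases $V_T=\emptyset$ and $|V_T|=1$ separately rather than trying to write a single unified inequality.
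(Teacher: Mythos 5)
Your proof is correct and follows the same reasoning the paper uses (the paper states this proposition without a dedicated proof, but the identical argument --- at most one clique vertex can belong to a total matching, and the selected edges form a matching of the clique or of the clique minus that vertex --- appears inside its proofs of the even-clique and odd-clique propositions). Your explicit case split on $V_T=\emptyset$ versus $|V_T|=1$ and the parity check $1+\lfloor (h-1)/2\rfloor=\lceil h/2\rceil$ are both accurate.
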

In particular, when the subgraph $K_h$ has even cardinality, we get the following result.
\begin{proposition}\label{EvenLemma}
Let $K_h$ 
be a complete graph, where $h \in \mathbb{N}$ is an even number. Then, the {\bf even-clique inequality} defined as
\begin{equation}\label{even_clique}
\sum\limits_{v \in V(K_h)}{x_{v}} + \sum\limits_{e \in E(K_h)}{y_{e}} \leq \frac{h}{2}
\end{equation}
is facet-defining for the Total Matching Polytope $P_{T}(K_h)$.
\end{proposition}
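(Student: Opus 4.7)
The plan is to use the coefficient technique: assume the even-clique face $F := \{z \in P_{T}(K_h) : \sum_{v} x_v + \sum_{e} y_e = h/2\}$ sits inside some facet $\{\lambda^{T}z = \lambda_0\}$, and then force $(\lambda, \lambda_0)$ to be a scalar multiple of $(\mathbf{1}, h/2)$. Combined with the full-dimensionality of $P_{T}(K_h)$ (Proposition \ref{fulldim}), this implies $F$ is itself a facet. Throughout I would exploit two natural families of size-$h/2$ (hence tight) total matchings in $K_h$, both of which exist precisely because $h$ is even: (i) perfect matchings of $K_h$; and (ii) sets of the form $\{u\} \cup M$, where $M$ is a perfect matching of $K_h - \{u,v\} \cong K_{h-2}$, of size $1 + (h-2)/2 = h/2$, leaving vertex $v$ as the unique uncovered vertex.

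First I would prove that all vertex coefficients are equal. For any $u, v \in V(K_h)$, pick any perfect matching $M$ of $K_h - \{u,v\}$; then $\{u\} \cup M$ and $\{v\} \cup M$ are both tight, so the equation $\lambda^{T}\chi[\{u\} \cup M] = \lambda^{T}\chi[\{v\} \cup M] = \lambda_0$ gives $\lambda_u = \lambda_v =: a$.

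Next I would show every edge coefficient equals $a$. Given an edge $e = \{u,v\}$, pick two further vertices $w, x \in V(K_h) \setminus \{u,v\}$ and a perfect matching $P$ of $K_h - \{u,v,w,x\}$, then compare the type-(ii) tight matching $T := \{u\} \cup \{wx\} \cup P$ (with $v$ as the uncovered vertex) against the type-(i) tight perfect matching $T' := \{uv\} \cup \{wx\} \cup P$. Since $T$ and $T'$ differ in exactly one coordinate (vertex $u$ versus edge $uv$), the equality $\lambda^{T}\chi[T] = \lambda^{T}\chi[T']$ collapses to $\lambda_u = \lambda_e$, so $\lambda_e = a$. Evaluating $\lambda^{T}\chi[\cdot]$ on any tight total matching then yields $\lambda_0 = a \cdot (h/2)$, so $(\lambda, \lambda_0) = a(\mathbf{1}, h/2)$ as required.

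The main obstacle is really only parity bookkeeping: the whole argument rests on both $h - 2$ and $h - 4$ being even, which is exactly what the even-$h$ hypothesis supplies---and is also why the odd-clique analogue of the statement fails. The boundary cases are minor: for $h = 2$ the inequality coincides with the already-known edge constraint facet $x_u + x_v + y_{uv} \leq 1$, and for $h = 4$ one simply takes the perfect matching $P$ to be empty.
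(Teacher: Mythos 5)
Your argument is correct and follows essentially the same route as the paper's: the indirect (coefficient) method driven by one-coordinate swaps between tight total matchings built from perfect and near-perfect matchings of $K_h$. The only cosmetic difference is that the paper organizes these swaps via an explicit round-robin $1$-factorization of $K_h$ into $h-1$ perfect matchings, whereas you choose the auxiliary perfect matchings ad hoc for each vertex pair and each edge; the validity of the inequality, which the paper re-derives inside its proof and you leave implicit, is already supplied by the preceding clique-inequality proposition, so that omission is harmless.
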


\begin{proof}
Let $ G = K_h$ be a complete graph, where $h=2l$ for $l \in \mathbb{N}$ and let $V(K_h) :=\{v_1, v_2, \dots, v_{2l}\}$ and $E(K_h) :=\{e_{i,j}=\{v_i,v_j\} \mid \forall i,j \in \{1,2, \dots, 2l\}, i \neq j\}$.
First, we show that the even-clique inequalities are valid for $P_{T}(G)$.
Since $K_h$ is a complete graph of even order, it admits a perfect matching $M$.
Notice that any stable set $S$ intersects $K_h$ in at most one vertex, 
%
thus a maximum total matching $T$ can be obtained by a perfect matching, 
or by deleting from a perfect matching an edge $e=\{i,j\}$ and adding one of its endpoints. 
This implies that $|T| \leq l$.
Next, we prove that the face induced by an even-clique inequality is facet-defining.
To this end, consider a face $F :=\{ z \in P_{T}(G) \mid \lambda^{T}z = \lambda_{0} \}$ and let $F':= \{z \in P_{T}(G) \mid \tilde\lambda^{T} z = \tilde\lambda_{0} \}$, where $\tilde\lambda^{T}z \leq \tilde\lambda_{0}$ corresponds to the even-clique inequality. 
Suppose that $F' \subseteq F$, we want to show that every inequality of $F$ is a scalar multiple of the even-clique inequality.
Place the vertices $v_1,v_2, \dots, v_{2l-1}$ at equal distances on a circle and place $v_{2l}$ in the center.
Starting from this configuration, we show a decomposition of $K_h$ into disjoint union of perfect matchings, such that $E(K_h)= M_1 \cup M_2 \cup \dots \cup M_{h-1}$.
Notice also that a perfect matching $M$ can be naturally identified as a total matching.
Now, fix an index $i$ and consider the edge that connects a vertex $v_{i}$ to the center $v_{2l}$ of the circle,
we call $c_{i} = \{v_i, v_{2l}\}$ the \textit{central edge}, and consider
the set of edges $ E_{i} :=\{e_{i+j,i-j} = \{v_{i+j},v_{i-j}\} \mid \forall j \in \{1, \dots, \frac{h}{2}-1\} \}$, where the indexes run modulo $h-1$.
It turns out that $M_{i} := E_{i} \cup \{c_{i}\}$ is a perfect matching.
In this way, repeating the same construction we can form $h-1$ distinct perfect matchings $M_{i}$, with $\chi[M_{i}] \in F'$, for all $i \in \{1,2, \dots, 2l-1\}$.
Now, we can construct a total matching with the same cardinality of the perfect matchings just constructed.
Consider an edge $e = \{v_{j},v_{k}\} \in M_{i}$ of a fixed perfect matching $M_{i}$.
Then, $T_{k} :=( M_{i} \setminus \{e_{j,k}\} ) \cup \{v_{k}\}$ and $T_{j} := (M_{i} \setminus \{e_{j,k}\} ) \cup \{v_j\}$ are total matchings.
Observe that $\chi[T_{j}] \in F'$ and $\chi[T_{k}] \in F'$, in particular these characteristic vectors lie on $F$.
This implies that 
$\lambda_{v_{j}} = \lambda_{v_{k}} = \lambda_{e_{j,k}}$, since $\lambda^{T}\chi[T_j] = \lambda^{T}\chi[T_k] = \lambda^{T}\chi[M_i]$,
where we denote as $\lambda_{a}$ the cost coefficient for the element $a \in D = V \cup E$.
In particular, we apply this construction for all the edges of the same perfect matching $M_{i}$. 
Repeating the same argument for all the perfect matchings in the decomposition,
we obtain that $\lambda_v = \lambda_{e}$ for $e \in \delta(v)$,
$\forall v \in V$, and since the cost coefficients for the endpoints of each edge are the same by construction, and we consider only perfect matchings (we can touch each vertex),
we deduce that there exists $a \in \mathbb{R}$ such that $\lambda = a\mathbf{1}$.
Thus, this implies that $\lambda_0 = a \frac{h}{2}$.
We conclude that $\lambda^{T}z \leq \lambda_{0}$ is a scalar multiple of the even-clique inequality since $(\lambda, \lambda_0) = a(\mathbf{1},\frac{h}{2})$.
This completes the proof. \qed
\end{proof}

Now, we are ready to prove the main theorem of this section.
\begin{theorem}\label{kappaEven}
Let $G$ be a graph, and let $K_h$ 
be a complete graph, where $h$ is even. Then, the {\bf even-clique inequality} defined as
\begin{equation*}\label{even_clique_1}
\sum\limits_{v \in V(K_h)}{x_{v}} + \sum\limits_{e \in E(K_h)}{y_{e}} \leq \frac{h}{2}
\end{equation*}
is facet-defining for the Total Matching Polytope $P_{T}(G)$.
\end{theorem}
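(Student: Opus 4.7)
The plan is to extend the argument of Proposition \ref{EvenLemma} from the case $G=K_h$ to an arbitrary host graph $G$. Let $F':=\{z\in P_T(G)\mid \tilde\lambda^{T}z=\tilde\lambda_0\}$ denote the face induced by the even-clique inequality, and suppose $F'\subseteq F:=\{z\in P_T(G)\mid \lambda^{T}z=\lambda_0\}$ for some valid inequality $\lambda^{T}z\leq \lambda_0$. I would show that $(\lambda,\lambda_0)=a(\tilde\lambda,\tilde\lambda_0)$ for some $a\in\mathbb{R}$, which is enough to conclude that the even-clique inequality is facet-defining.

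The first step is to reuse Proposition \ref{EvenLemma} as a black box for the coefficients inside $K_h$. Every total matching of $K_h$ of cardinality $h/2$ is also a total matching of $G$, and its characteristic vector (seen in $P_T(G)$) satisfies the even-clique inequality with equality, hence lies on $F'$. In particular, all the total matchings $M_i$, $T_j$, $T_k$ built in the proof of Proposition \ref{EvenLemma} sit on $F'\subseteq F$. Since subtracting two such characteristic vectors yields a vector supported in $V(K_h)\cup E(K_h)$, the chain of comparisons $\lambda^{T}\chi[M_i]=\lambda^{T}\chi[T_j]=\lambda^{T}\chi[T_k]$ carries over verbatim and produces the same conclusion: there exists a constant $a\in\mathbb{R}$ with $\lambda_u=a$ for every $u\in V(K_h)\cup E(K_h)$.

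The heart of the new argument is to prove $\lambda_u=0$ for every element $u\in (V\cup E)\setminus (V(K_h)\cup E(K_h))$. Fix a perfect matching $M$ of $K_h$, which exists because $h$ is even; then $\chi[M]\in F'$. For a vertex $w\notin V(K_h)$, the set $M\cup\{w\}$ is still a total matching of $G$ because $w$ is not an endpoint of any edge of $M\subseteq E(K_h)$, and its clique-sum is still $h/2$; comparing $\chi[M]$ and $\chi[M\cup\{w\}]$ in $F$ yields $\lambda_w=0$. The same trick with $M\cup\{e\}$ handles any edge $e$ whose endpoints are both outside $V(K_h)$. The delicate case is an edge $e=\{v,w\}$ with $v\in V(K_h)$ and $w\notin V(K_h)$, since $M$ already covers $v$ so $M\cup\{e\}$ is not a total matching. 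I would exploit evenness once more: letting $e'=\{v,v'\}$ be the unique edge of $M$ incident to $v$, the set $T^*:=(M\setminus\{e'\})\cup\{v'\}$ is a total matching of $K_h$ of cardinality $h/2$ that leaves $v$ uncovered, and a direct check (using $w\notin V(K_h)$ and the fact that no remaining edge of $T^*$ is incident to $v$) shows that $T^*\cup\{e\}$ is a total matching of $G$; comparing $\chi[T^*]$ and $\chi[T^*\cup\{e\}]$ in $F$ yields $\lambda_e=0$.

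Combining the two steps gives $\lambda=a\tilde\lambda$, and evaluating at any perfect matching $M$ lying in $F'\cap F$ forces $\lambda_0=a\cdot(h/2)=a\tilde\lambda_0$, so the inequality defining $F$ is a scalar multiple of the even-clique inequality. I expect the main obstacle to be the last subcase above: it is precisely there that the evenness of $h$ becomes essential, because we need a maximum total matching in $K_h$ that leaves a prescribed vertex uncovered, and that construction relies on the existence of a perfect matching of $K_h$, which fails when $h$ is odd.
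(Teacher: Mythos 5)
Your proposal is correct. The combinatorial constructions you use are in fact the same ones the paper uses: a perfect matching $M$ of $K_h$ extended by an outside vertex or an outside edge with both endpoints outside the clique, and, for the delicate case of an edge $e=\{v,w\}\in\delta(V(K_h))$, the swap that removes the matching edge $e'=\{v,v'\}$ covering $v$, adds the vertex $v'$, and then appends $e$ (this is exactly the paper's $T_{\overline{e}_s}:=(M_s\setminus\{e\})\cup\{\overline{s}\}\cup\{\overline{e}_s\}$). Where you differ is in the proof framework: the paper switches, for the general-$G$ statement, to the direct method of exhibiting $n+m$ affinely independent points on the face and arguing independence via a block-triangular arrangement of the incidence matrix, whereas you stay with the indirect method already used in Proposition \ref{EvenLemma}, i.e., you show that any valid inequality whose face contains $F'$ has $\lambda_u=a$ on $V(K_h)\cup E(K_h)$, $\lambda_u=0$ elsewhere, and $\lambda_0=a\,\tilde\lambda_0$. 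Your route is the more uniform of the two (it reuses Proposition \ref{EvenLemma} as a black box and never has to verify affine independence explicitly), at the cost of implicitly relying on full-dimensionality of $P_T(G)$ (Proposition \ref{fulldim}) and on $F'$ being a proper nonempty face; both facts are available in the paper and worth a one-line mention. Your closing observation about where evenness enters, namely that one needs a maximum total matching of $K_h$ leaving a prescribed vertex of the clique uncovered, correctly identifies the step that fails for odd $h$ and matches the paper's subsequent discussion of why the odd-clique inequality is not facet-defining.
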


\begin{proof}
Let $K_h$ be a complete subgraph of even order of $G$.
We denote as $F$ the face induced by the even-clique inequality.
By Proposition \ref{EvenLemma}, we can find $|V(K_h)|+|E(K_h)|$ affinely independent points satisfying at equality the even-clique inequality.
Now, fix a perfect matching $M $ of $G[V(K_h)]$.
Since $M \cap \{u\} = \emptyset$ for every $u \notin V(K_h)$,
$T_u:= M \cup \{u\}$ is a total matching.
Observe that, $\chi[T_u] \in F$.
Thus, the set of characteristic vectors $\{\chi[T_u] \mid \forall u \notin V(K_h)\}$ is contained in $F$ and the corresponding $|V \setminus V(K_h)|$ points are affinely independent.
Clearly, it is easy to see that they are still affinely independent with respect to the previous points, so we have $n$ points up to now.
Similarly, $T_e:= M \cup \{e\}$ for every $e \notin \delta(V(K_h)) \cup E(K_h)$ is a total matching, since $M \cap \{e\} = \emptyset$.
Consequently, also the set of vectors $\{\chi[T_e] \mid \forall e \notin \delta(V(K_h) \cup E(K_h)\}$ is contained in $F$, and the corresponding points are affinely independent.
Now, let $S := \{ v \in V(K_h) \mid \delta(V(K_h)) \neq \emptyset\}$.
%
We can construct a total matching $T_{\overline{s}} :=( M_{s} \setminus \{e\}) \cup \{\overline{s}\} $, where $e=\{s,\overline{s}\}  \in E(K_h)$, $s \in S$ and $M_{s}$ is a perfect matching of $G[V(K_h)]$ with one end-point in $s$.
Then, $T_{\overline{e}_{s}} := T_{\overline{s}} \cup \{\overline{e}_{s}\}$ for every $\overline{e}_{s} \in \delta(V(K_h)) \cap \delta(s)$, is a total matching whose characteristic vector lies on $F$.
Repeating the same construction for all the edges $e_s \in \delta(V(K_h))$, we can obtain distinct total matchings for every $s \in S$ whose characteristic vectors belong to $F$, where the corresponding points are affinely independent.
In this way, we have found $n+m$ affinely independent points belonging to $F$, since we can rearrange the rows of the matrix having as columns these points in such a way that we get the following form:
\begin{center}
$\left[
\begin{array}{c|c|c}
A_{K_h} & B_{K_h} & C_{K_h} \\ \hline
\mathbf{0} & \widetilde{I_{v}} & \mathbf{0} \\ \hline
\mathbf{0} & \mathbf{0} & \widetilde{I_{e}}\\
\end{array}\right],
$
\end{center}
where the matrices $A_{K_h},B_{K_h},C_{K_h}$ have dimension $|V(K_h)| \times |E(K_h)|$ and correspond to the vertex and edge components of $K_h$.
The rest of the blocks are the zero and identity matrices of the remaining vertex and edge components.
This completes the proof. \qed
\end{proof} 
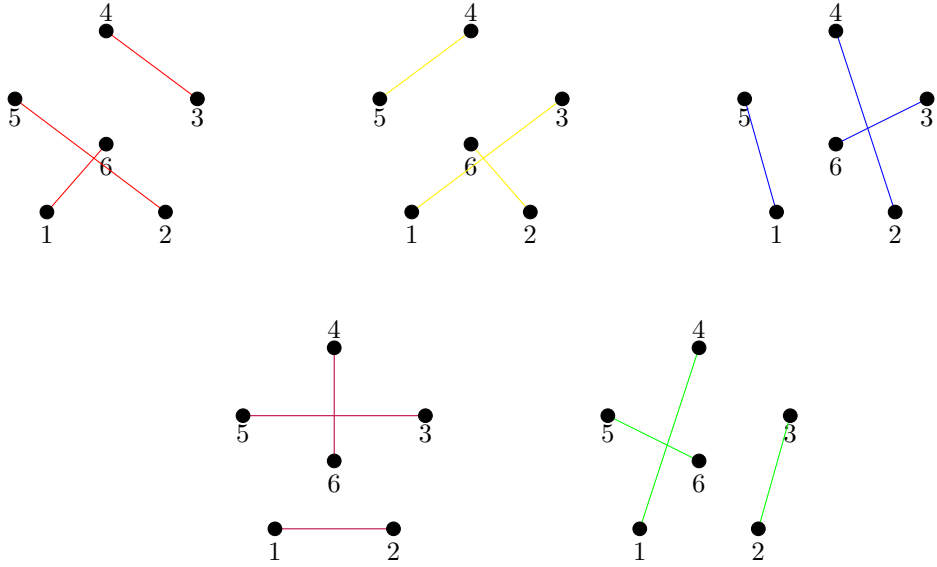
\begin{figure}[!htbp]
		\centering
		\begin{tikzpicture}[scale=0.60]
	
		\coordinate (v) at (1.2,1.5) ;
		\coordinate (w) at (3.8,1.5) ;
		\coordinate (z) at (4.5,4) ;
		\coordinate (x) at (0.5,4) ;
		\coordinate (c) at (2.5,3) ;
		\coordinate  (y) at (2.5,5.5);
		\node at (1.2,1) {$1$};
		\node at (3.8,1) {$2$};
		\node at (2.5,2.5){$6$};
		\node at (4.5,3.6) {$3$};
		\node at (2.5,5.9) {$4$};
		\node at (0.5,3.6) {$5$};
		\draw[red] (y)--(z);
	    \draw[red] (x)--(w);
	    \draw[red] (v)--(c);
		\draw[fill=black] (x) circle [radius=0.15cm];
		\draw[fill=black] (y) circle [radius=0.15cm];
		\draw[fill=black] (z) circle [radius=0.15cm];
		\draw[fill=black] (w) circle [radius=0.15cm];
		\draw[fill=black] (v) circle [radius=0.15cm];
		\draw[fill=black]  (c) circle [radius=0.15cm];

		\coordinate (v1) at (1.2+8,1.5) ;
		\coordinate (w1) at (3.8+8,1.5) ;
		\coordinate (z1) at (4.5+8,4) ;
		\coordinate (x1) at (0.5+8,4) ;
		\coordinate (c1) at (2.5+8,3) ;
		\coordinate  (y1) at (2.5+8,5.5);
		\node at (1.2+8,1) {$1$};
		\node at (3.8+8,1) {$2$};
		\node at (2.5+8,2.5){$6$};
		\node at (4.5+8,3.6) {$3$};
		\node at (2.5+8,5.9) {$4$};
		\node at (0.5+8,3.6) {$5$};
		\draw[yellow] (v1)--(z1);
		\draw[yellow] (y1)--(x1);
		\draw[yellow] (c1)--(w1);
		\draw[fill=black] (x1) circle [radius=0.15cm];
		\draw[fill=black] (y1) circle [radius=0.15cm];
		\draw[fill=black] (z1) circle [radius=0.15cm];
		\draw[fill=black] (w1) circle [radius=0.15cm];
		\draw[fill=black] (v1) circle [radius=0.15cm];
		\draw[fill=black]  (c1) circle [radius=0.15cm];

		\coordinate (v2) at (1.2+16,1.5) ;
		\coordinate (w2) at (3.8+16,1.5) ;
		\coordinate (z2) at (4.5+16,4) ;
		\coordinate (x2) at (0.5+16,4) ;
		\coordinate (c2) at (2.5+16,3) ;
		\coordinate  (y2) at (2.5+16,5.5);
		\node at (1.2+16,1) {$1$};
		\node at (3.8+16,1) {$2$};
		\node at (2.5+16,2.5){$6$};
		\node at (4.5+16,3.6) {$3$};
		\node at (2.5+16,5.9) {$4$};
		\node at (0.5+16,3.6) {$5$};
		\draw[blue] (x2)--(v2);
		\draw[blue] (y2)--(w2);
		\draw[blue] (z2)--(c2);
		\draw[fill=black] (x2) circle [radius=0.15cm];
		\draw[fill=black] (y2) circle [radius=0.15cm];
		\draw[fill=black] (z2) circle [radius=0.15cm];
		\draw[fill=black] (w2) circle [radius=0.15cm];
		\draw[fill=black] (v2) circle [radius=0.15cm];
		\draw[fill=black]  (c2) circle [radius=0.15cm];

		\coordinate (v3) at (1.2+5,1.5-7) ;
		\coordinate (w3) at (3.8+5,1.5-7) ;
		\coordinate (z3) at (4.5+5,4-7) ;
		\coordinate (x3) at (0.5+5,4-7) ;
		\coordinate (c3) at (2.5+5,3-7) ;
		\coordinate  (y3) at (2.5+5,5.5-7);
		\node at (1.2+5,1-7) {$1$};
		\node at (3.8+5,1-7) {$2$};
		\node at (2.5+5,2.5-7){$6$};
		\node at (4.5+5,3.6-7) {$3$};
		\node at (2.5+5,5.9-7) {$4$};
		\node at (0.5+5,3.6-7) {$5$};
		\draw[purple] (w3)--(v3);
		\draw[purple] (x3)--(z3);
		\draw[purple] (y3)--(c3);
		\draw[fill=black] (x3) circle [radius=0.15cm];
		\draw[fill=black] (y3) circle [radius=0.15cm];
		\draw[fill=black] (z3) circle [radius=0.15cm];
		\draw[fill=black] (w3) circle [radius=0.15cm];
		\draw[fill=black] (v3) circle [radius=0.15cm];
		\draw[fill=black]  (c3) circle [radius=0.15cm];

	    \coordinate (v4) at (1.2+13,1.5-7) ;
	    \coordinate (w4) at (3.8+13,1.5-7) ;
		\coordinate (z4) at (4.5+13,4-7) ;
		\coordinate (x4) at (0.5+13,4-7) ;
		\coordinate (c4) at (2.5+13,3-7) ;
		\coordinate  (y4) at (2.5+13,5.5-7);
		\node at (1.2+13,1-7) {$1$};
		\node at (3.8+13,1-7) {$2$};
		\node at (2.5+13,2.5-7){$6$};
		\node at (4.5+13,3.6-7) {$3$};
		\node at (2.5+13,5.9-7) {$4$};
		\node at (0.5+13,3.6-7) {$5$};
		\draw[green] (z4)--(w4);
		\draw[green] (y4)--(v4);
		\draw[green] (x4)--(c4);
		\draw[fill=black] (x4) circle [radius=0.15cm];
		\draw[fill=black] (y4) circle [radius=0.15cm];
		\draw[fill=black] (z4) circle [radius=0.15cm];
		\draw[fill=black] (w4) circle [radius=0.15cm];
		\draw[fill=black] (v4) circle [radius=0.15cm];
		\draw[fill=black]  (c4) circle [radius=0.15cm];
		\end{tikzpicture}
		
		\caption{Five perfect matchings of $K_{6}$}
		\label{fig:K6}
	\end{figure}

\begin{proposition}
Let $G$ be a graph, and let $K_h$ be a complete subgraph of $G$, where $h$ is odd. 
Then, the {\bf odd clique inequality}\label{odd_cli} defined as
\begin{equation}
\begin{aligned}
 \sum\limits_{v \in V(K_h)}{x_{v}} + \sum\limits_{e \in E(K_h)}{y_{e}} \leq \frac{h+1}{2}
\end{aligned}
\end{equation}
is valid for the Total Matching Polytope $P_{T}(G)$, but it is not facet-defining.
\end{proposition}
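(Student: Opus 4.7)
The plan has two parts: first verify that the inequality is valid, then show the induced face is not a facet by exhibiting the inequality as the sum of two other valid inequalities with disjoint variable supports.

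For validity, I would argue by case analysis on how a total matching $T$ intersects $K_h$. Since $V(K_h)$ is a clique, $T$ contains at most one vertex of $K_h$. If none, then $T \cap E(K_h)$ is a matching of $K_h$, which for odd $h$ has size at most $(h-1)/2 < (h+1)/2$. If some $v \in V(K_h)$ lies in $T$, then no edge of $T$ is incident to $v$, so $T \cap E(K_h)$ is a matching of $K_h - v \cong K_{h-1}$, again of size at most $(h-1)/2$; adding the single vertex yields the bound $(h+1)/2$.

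For the non-facet property, the key observation is that the odd-clique inequality decomposes as the sum of the vertex-clique inequality $\sum_{v \in V(K_h)} x_v \leq 1$ and the matching inequality $\sum_{e \in E(K_h)} y_e \leq (h-1)/2$. The latter is valid because the edge support of any total matching is a matching of $G$, which restricted to the odd clique $K_h$ has at most $\lfloor h/2 \rfloor = (h-1)/2$ edges. Consequently, every $z \in P_T(G)$ that saturates the odd-clique inequality must saturate both summands simultaneously, so the induced face $F$ is contained in the intersection of the two hyperplanes $\{z : \sum_{v \in V(K_h)} x_v = 1\}$ and $\{z : \sum_{e \in E(K_h)} y_e = (h-1)/2\}$.

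The two equations defining this intersection have disjoint variable supports, so they are linearly independent. Hence $F$ lies in an affine subspace of codimension two, and $\dim(F) \leq n+m-2 < n+m-1$, which rules out $F$ being a facet. I do not anticipate any real obstacle; the only subtlety worth noting is that the argument does not require $K_h$ to be maximal in $G$, since the linear independence of the two saturated equations is enough on its own, and we never need to appeal to the facet-dimension of the vertex-clique inequality directly.
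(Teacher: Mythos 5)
Your proof is correct, and it takes a different route from the paper's own proof while coinciding with a remark the paper makes immediately afterwards. The paper proves the non-facet claim by embedding $K_h$ into an even clique $K_{h+1}$ and observing that the odd-clique inequality is dominated by the even-clique inequality of $K_{h+1}$, which has the same right-hand side $\frac{h+1}{2}$; this implicitly requires a vertex adjacent to all of $V(K_h)$ to exist, which is why the paper then adds a separate remark handling maximal odd cliques via exactly your decomposition $\sum_{v} x_v \leq 1$ plus $\sum_{e} y_e \leq \frac{h-1}{2}$. You apply that decomposition uniformly and push it through with the cleaner codimension argument: the face saturating a sum of two valid inequalities must saturate both, and since the two normals have disjoint nonempty supports they are linearly independent, so the face has dimension at most $n+m-2$ in the full-dimensional polytope $P_T(G)$ (Proposition \ref{fulldim}). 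This buys you a single argument that needs neither maximality nor its negation, and does not rely on the vertex-clique inequality being facet-defining; what you lose relative to the paper's domination argument is the structural information, available when $K_h$ is not maximal, of an explicit stronger facet-defining inequality that supersedes the odd-clique one. The only point worth making explicit is that you need $h \geq 3$ so that $E(K_h) \neq \emptyset$ and both supports are genuinely nonempty; for $h=1$ the second inequality is vacuous and the linear-independence step degenerates (as does the statement itself).
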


\begin{proof}
Let $K_{h}$ be a clique of odd order. Since in a total matching of $K_{h}$ we can pick at most one vertex, and the size of the largest matching is $\frac{h-1}{2}$,
we can take at most $\frac{h-1}{2}+1=\frac{h+1}{2}$ elements of a total matching, as shown in Figure \ref{fig:K5}.
Therefore, this implies that the odd clique inequality is valid for $P_{T}(G)$.
Now, we prove that it is not facet-defining.
Adding a vertex $u$ to the clique $K_{h}$, we can form a clique of even order $K_{h+1}:=(V(K_{h+1}),E(K_{h+1}))$, where $V(K_{h+1}):=V(K_{h}) \cup \{u\}$ and $E(K_{h+1}):= E(K_{h}) \cup \{e = \{u,v\} \mid v \in V(K_{h})\}$.
Then, the inequality 
\begin{equation*}
\sum\limits_{v \in V(K_{h})}x_{v} + \sum\limits_{e \in E(K_{h})}y_{e} \leq \frac{h+1}{2}
\end{equation*}
is dominated by
\begin{equation*}
\sum\limits_{v \in V(K_{h+1})}x_{v} + \sum\limits_{e \in E(K_{h+1})}y_e \leq \frac{h+1}{2}
\end{equation*}
This completes the proof.
\qed
\end{proof}

We stress that, even if the odd clique inequality is maximal it remains not facet-defining. 
Indeed, suppose that $K_h$ is a maximal clique of odd order.
We know that
\begin{align*}
    \sum_{v \in V(K_h)}x_v \leq 1
\end{align*}
 is facet-defining, and it is easy to notice that the following is a valid inequality for the Total Matching Polytope
\begin{align*}
    \sum_{e \in E(K_h)}y_e \leq \frac{h-1}{2}.
\end{align*}
Thus, the sum of the two inequalities gives the odd-clique inequality.
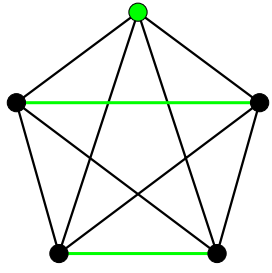
\begin{figure}[!t]
		\centering
		\begin{tikzpicture}[scale=0.40]
		
		\coordinate (v') at (2*1.2-2.5+16,2*1.5-3.2) ;
		\coordinate (w') at (2*3.8-2.5+16,2*1.5-3.2) ;
		\coordinate (z') at (2*4.5-2.5+16,2*4-3.2) ;
		\coordinate (x') at (2*0.5-2.5+16,2*4-3.2) ;
		\coordinate  (y') at (2*2.5-2.5+16,2*5.5-3.2);
	    
	    \draw[line width=0.03 cm] (z')--(y')--(v')--(x')--(y')--(w')--(z')--(v');
	    \draw[line width=0.03 cm] (w')--(x');
	    \draw[green,line width=0.04 cm] (v')--(w');
	    \draw[green,line width=0.04 cm] (x')--(z');
	    
	    \draw[fill=black]  (v') circle [radius=0.30cm];
		\draw[fill=black] (w') circle [radius=0.30cm];
		\draw[fill=black] (z') circle [radius=0.30cm];
		\draw[fill=green] (y') circle [radius=0.30cm];
		\draw[fill=black] (x') circle [radius=0.30cm];

		\end{tikzpicture}
		\caption{A complete $K_5$ graph. In green, a possible maximal total matching.\label{fig:K5}}
\end{figure}

\paragraph{Separation for the even-clique inequalities}

We propose the following ILP model to detect a maximum violated even-clique, which is based on the maximum edge weighted clique model discussed in \cite{MaxWeightedClique}:
\begin{align}
\label{cliq:obj}  \max \quad & \sum\limits_{v \in V}c_vx_{v} + \sum\limits_{e \in E}w_e y_{e} - z\\
\label{cliq:c1}    \mbox{s.t.} \quad 
                 &  x_v + x_w \leq 1 & \forall \{v,w\} \in \overline{E} \\
\label{cliq:c2}  & \sum_{v \in V}x_v  =2z  \\
\label{cliq:c3}  &  y_e \leq x_v & \forall e =\{u,v\} \in E \\
\label{cliq:c4}  &  y_e \leq x_u & \forall e= \{u,v\}  \in E \\
\label{cliq:c5}  &  x_v + x_u \leq y_e + 1 & \forall e = \{u,v\} \in E \\
\label{cliq:c6}  & x_v,y_e \in \{0,1\} & \forall v \in V, \forall e \in E \\
\label{cliq:c7}  & z \in \mathbb{Z},
\end{align}
\noindent where $\overline{E}$ represents the complement of $E(G)$. Since we want to detect a clique of even order, we introduce
the integer variable $z \in \mathbb{Z}$.
If the optimal solution is positive, we get a maximally violated even-clique inequality.
The constraints \eqref{cliq:c1} are equivalent to imposing the condition that we can select at most one vertex from a maximal stable set in the clique found.
%
%
In \cite{MaxWeightedClique}, it is proven that finding a maximum weighted edge clique is NP-hard.
Consequently, the problem \eqref{cliq:obj}--\eqref{cliq:c6} is NP-hard in general, and it contains the maximum edge weighted clique as a special case. 
%

\section{Computational results}


This section presents computational results for the two total coloring relaxations presented in Section \ref{sec:coloring} and for the total matching relaxations based on different valid (facet) inequalities discussed in Sections \ref{sec:matching} and \ref{sec:valid}.
The results for the TCP are obtained with a Column Generation algorithm based on model \eqref{m3:d1}--\eqref{m3:d3}, which are compared with the results provided by the LP assignment model \eqref{m1:obj}--\eqref{m1:y}.
The results for the TMP aim to compare the strength of the families of valid (facet) inequalities discussed in this paper.
For both problems, the goal of the computational tests is to compare the bound strengths, that are, lower bounds for total coloring and upper bounds for total matching.
%
\paragraph{Datasets} 
First, we tested our algorithms on a few named graphs: the cycle $C_5$, the complete graph $K_{12}$, and the classical Petersen, Chvatal, Tutte, and Watkins graphs.
Then, after running preliminary tests on a large set of graphs from the literature, we decided to focus on a small set of graphs to evaluate the total coloring relaxations. 
In practice, we observed that most of the graphs from the literature are of Type-1, that is, they have $\chi_T(G)=\Delta(G)+1$ (see, \cite{Yap2006}).
For this class of graphs, the naive lower bound equal to $\Delta(G)+1$ is tight, and the contribution of any LP relaxation is minimal. 
Hence, we have selected 11 cubic graphs of Type-1 and 11 Snark graphs of Type-2, those having $\chi_T(G)>\Delta(G)+1$, downloaded from the House of Graph library\footnote{See \url{https://hog.grinvin.org/}}, first introduced in \cite{Brinkmann2015}, and named {\tt graph\_N}. 
The Snarks graphs are cyclically 4-edge-connected graphs with $\chi'(G)= 4$.
For Type-2 graphs, we show that the set covering model \eqref{m3:d1}--\eqref{m3:d3} provides better lower bounds than the assignment model \eqref{m1:obj}--\eqref{m1:y}.
Finally, we consider random cubic graphs of different sizes and random sparse graphs with 80 vertices but different edge densities for evaluating the total matching relaxations.
%

\paragraph{Implementation Details} We have implemented in Python a Column Generation algorithm for the TCP and a Cutting Plane algorithm for the TMP. We use Gurobi 9.1.1 for solving both the master, the pricing, and the different cut-separation subproblems.
The experiments are run on a Dell Workstation with a Intel Xeon W-2155 CPU with 10 physical cores at 3.3GHz and 32 GB of RAM.
The source code and the dataset is freely available on GitHub at \url{https://github.com/stegua/total-matching}.

\subsection{Total Coloring Lower Bounds}

Table \ref{tab:my_label} reports our results for comparing the bound strength achieved by the LP relaxation of model \eqref{m2:obj}--\eqref{m2:l} (LP, 6-$th$ column) and the LP relaxation of the set covering model (SC-LP, 7-$th$ columnn). 
The first columns give for each graph the number of vertices $n$ and of edges $m$, the maximum degree $\Delta(G)$, and the total chromatic number $\chi_T(G)$.
The last two columns report the number of column generation iterations (CG iter) and the total runtime in seconds.
For all Type-1 graphs, the trivial lower bound $\Delta(G)+1$ is already equal to $\chi_T(G)$, and the contribution of the LP relaxation in terms of bounds is null. 
On the contrary, for all Type-2 graphs, the lower bounds provided by the column generation algorithm are higher than those obtained with the assignment model: this provides computational evidence that the inequality $z_{LP}^{(C)} \geq z_{LP}^{(A)}$ in Proposition 3 is not always tight.

\begin{table}[!t]
    \centering
    \begin{tabular}{l@{\hskip 0.15in}r@{\hskip 0.15in}r@{\hskip 0.15in}r@{\hskip 0.15in}r@{\hskip 0.15in}r@{\hskip 0.15in}r@{\hskip 0.15in}r@{\hskip 0.15in}r@{\hskip 0.15in}r}
    	Graph Name & $n$	& $m$&	Type & $\Delta(G)$ &	$\chi_T(G)$ &	LP  &	SC-LP & CG iter & Runtime \\
    	\hline
Cycle $C_5$ &	5&	5&	Type-2 & 2&	4&	3.00&	{\bf 3.33}&	22&	0.00\\
Complete $K_{12}$&	12&	66&	Type-2 &11&	13&	12.00&	{\bf 13.00}&	156&	0.38\\
Petersen&	10&	15&Type-1 &	3&	4&	4.00&	4.00&	61&	0.00\\
Chvatal&	12&	24&Type-1 &	4&	5&	5.00&	5.00&	95&	0.01 \\
Tutte &	46&	69&Type-1 &	3&	4&	4.00&	4.00& 777	&	11.35 \\
Watkins	&	50	&	75	&Type-1 &	3	&	4	&	4.00	&	4.00	&	686	&	9.10	\\
\hline
graph\_6921  & 20 & 30 &Type-1 & 3 & 4 & 4.00 & 4.00 & 164  & 0.08 \\
graph\_1008  & 22 & 33 &Type-1 & 3 & 4 & 4.00 & 4.00 & 219  & 0.17 \\
graph\_1012  & 22 & 33 &Type-1 & 3 & 4 & 4.00 & 4.00 & 207 & 0.16 \\
graph\_3334  & 26 & 39 &Type-1 & 3 & 4 & 4.00 & 4.00 & 279  & 0.37 \\
graph\_20015 & 30 & 45 &Type-1 & 3 & 4 & 4.00 & 4.00 & 314  & 0.56 \\
graph\_3383  & 36 & 54 &Type-1 & 3 & 4 & 4.00 & 4.00 & 438  & 1.50 \\
graph\_22470 & 38 & 57 &Type-1 & 3 & 4 & 4.00 & 4.00 & 459  & 1.83 \\
graph\_25159 & 44 & 66 &Type-1 & 3 & 4 & 4.00 & 4.00 & 583  & 6.44 \\
graph\_1338  & 50 & 75 &Type-1 & 3 & 4 & 4.00 & 4.00 & 691 & 7.01 \\
graph\_1427  & 50 & 75 &Type-1 & 3 & 4 & 4.00 & 4.00 & 704  & 7.30 \\
graph\_1389  & 60 & 90 &Type-1 & 3 & 4 & 4.00 & 4.00 & 1010 & 21.15 \\
\hline
graph\_6630	&	22	&	31	&Type-2 &	3	&	5	&	4.00	&	{\bf 4.08}	&	185	&	0.14	\\
graph\_6710	&	40	&	60	&Type-2 &	3	&	5	&	4.00	&	{\bf 4.08}	&	459	&	2.70	\\
graph\_6714	&	40	&	60	&Type-2 &	3	&	5	&	4.00	&	{\bf 4.08}	&	429	&	2.19	\\
graph\_6720	&	40	&	60	&Type-2 &	3	&	5	&	4.00	&	{\bf 4.08}	&	444	&	2.89	\\
graph\_6724	&	40	&	60	&Type-2 &	3	&	5	&	4.00	&	{\bf 4.08}	&	441	&	2.60	\\
graph\_6728	&	40	&	60	&Type-2 &	3	&	5	&	4.00	&	{\bf 4.08}	&	458	&	2.85	\\
graph\_6708	&	40	&	60	&Type-2 &	3	&	5	&	4.00	&	{\bf 4.08}	&	456	&	3.03	\\
graph\_6712	&	40	&	60	&Type-2 &	3	&	5	&	4.00	&	{\bf 4.08}	&	439	&	2.67	\\
graph\_6718	&	40	&	60	&Type-2 &	3	&	5	&	4.00	&	{\bf 4.08}	&	445	&	2.66	\\
graph\_6722	&	40	&	60	&Type-2 &	3	&	5	&	4.00	&	{\bf 4.08}	&	435	&	2.51	\\
graph\_6726	&	40	&	60	&Type-2 &	3	&	5	&	4.00	&	{\bf 4.08}	&	456	&	2.80	\\
\hline\\
    \end{tabular}
    \caption{Comparing total coloring lower bounds obtained with two different relaxations: LP refers to the relaxation of \eqref{m2:obj}--\eqref{m2:l}, while SC-LP refers to \eqref{m3:d1}--\eqref{m3:d3}.}
    \label{tab:my_label}
\end{table}

\subsection{Total Matchings Lower Bounds on Snark Graphs}

Table~\ref{tab:snark} presents the computational results on the total matching problem for 11 snark (cubic) graphs of Type-2.
For each graph, the table reports the number of vertices $n$ and of edges $m$, the matching number $\nu(G)$, the stable set number $\alpha(G)$, and the total matching number $\nu_T(G)$.
Then, we report the upper bounds obtained with the {\it basic inequalities} \eqref{b1:v}--\eqref{b1:a} (column {\it Basic}), the upper bounds obtained separating only vertex-clique inequalities \eqref{vertex-clique} (column {\it Clique}), and separating only congruent-$2k3$ cycle inequality \eqref{cycle} (column {\it Cycle-$2k3$}).
For the later inequality, we also report the percentage of the optimality gap, computed as $\frac{UB-\nu_T(G)}{\nu_T(G)}\times 100$.
Since Snark graphs have no cliques violated, the only inequalities that improve the bounds are the congruent-$2k3$ cycle inequalities.
Notice that we did not even try to separate even-clique inequalities for this family of graphs since Snark graphs are a special case of cubic graphs, and they cannot have cliques with cardinality larger than three.
The results presented in the next paragraphs will show for which type of graphs the vertex-clique and the even-clique inequalities begin to play a role.

\begin{table}[t!]
    \centering
    \begin{tabular}{@{ }lc@{ }c@{ }c@{ }c@{ }cc@{ }c@{ }c@{ }cc@{ }c@{ }}
    \hline
            \multicolumn{6}{c}{Snark Graphs} & \multicolumn{4}{c}{Upper bounds} & \multicolumn{2}{c}{Number of cuts} \\  
Name & $n$	&	$m$	&	$\nu(G)$	&	$\alpha(G)$	&	$\nu_T(G)$	&	Basic	&	Clique	&	Cycle-$2k3$	&	Percentage & Clique	&	Cycle-$2k3$ \\
\hline
graph\_6630 & 22	&	31	&	11	&	9	&	13	&	15.23	&	15.23	&	14.24	&	9.5\%	&	0	&	19	\\
graph\_6710 & 40	&	60	&	20	&	17	&	26	&	28.00	&	28.00	&	27.24	&	4.8\%	&	0	&	24	\\
graph\_6714 & 40	&	60	&	20	&	16	&	26	&	28.00	&	28.00	&	27.13	&	4.3\%	&	0	&	26	\\
graph\_6720 & 40	&	60	&	20	&	16	&	26	&	28.00	&	28.00	&	27.23	&	4.7\%	&	0	&	25	\\
graph\_6724 & 40	&	60	&	20	&	16	&	26	&	28.00	&	28.00	&	27.21	&	4.6\%	&	0	&	27	\\
graph\_6728 & 40	&	60	&	20	&	16	&	26	&	28.00	&	28.00	&	27.20	&	4.6\%	&	0	&	22	\\
graph\_6708 & 40	&	60	&	20	&	17	&	26	&	28.00	&	28.00	&	27.20	&	4.6\%	&	0	&	29	\\
graph\_6712 & 40	&	60	&	20	&	16	&	26	&	28.00	&	28.00	&	27.20	&	4.6\%	&	0	&	29	\\
graph\_6718 & 40	&	60	&	20	&	16	&	26	&	28.00	&	28.00	&	27.19	&	4.6\%	&	0	&	24	\\
graph\_6722 & 40	&	60	&	20	&	16	&	26	&	28.00	&	28.00	&	27.14	&	4.4\%	&	0	&	27	\\
graph\_6726 & 40	&	60	&	20	&	16	&	26	&	28.00	&	28.00	&	27.21	&	4.7\%	&	0	&	32	\\
\hline
    \end{tabular}
    \caption{Total Matching results for 11 Snark (cubic) graphs: Upper bounds and number of generated cuts.}
    \label{tab:snark}
\end{table}

\subsection{Total Matchings Lower Bounds on Cubic Graphs}

Table~\ref{tab:cubic:single} presents the computational results on 6 random cubic graphs.
For each graph, the table gives the number of nodes $n$ and of edges $m$, the matching number $\nu(G)$, the stable set number $\alpha(G)$, and the total matching number $\nu_T(G)$.
From the 6-th to the 9-th columns we report the upper bound obtained with the LP relaxation \eqref{b1:v}--\eqref{b1:a} (column {\it Basic}), starting with \eqref{b1:v}--\eqref{b1:a} and separating only constraints \eqref{vertex-clique} (column {\it Clique}), starting with \eqref{b1:v}--\eqref{b1:a} and separating only constraints \eqref{cycle} (column {\it Cycle-$2k3$}), 
and starting with \eqref{b1:v}--\eqref{b1:a} and separating both \eqref{vertex-clique} and \eqref{cycle} (column {\it All}). 
For the same combinations of inequalities, the last four columns report the percentage optimality gap computed with respect to $\nu_T(G)$.

We remark that on random cubic graphs, which are very sparse graphs, the congruent-$2k3$ cycle inequalities close a larger fraction of the optimality gap than the vertex-clique inequalities.
%
%

Table~\ref{tab:cubic} reports the average results for our  algorithm based on vertex-clique constraints and congruent-$2k3$ cycle inequalities.
%
%
The purpose of the table is to show the average strength in terms of bound strength for both families of inequalities.
For each row of the table, we report the average over 10 random instances of the number of violated cuts identified by our algorithm and of the percentage gap with respect to the optimal solution.
We remark that the number of vertex-cliques is limited, with an average of violated cuts ranging 0.9 to to 2.1.
The number of violated congruent-$2k3$ cycle inequalities is in average much larger, but the percentage gap of the upper bound is only slightly better.
Again, combining both vertex-cliques and congruent-$2k3$ cycle inequalities, the average percentage gap of the upper bound is always stronger than for the two families taken separately.

\begin{table}[t!]
    \centering
    \begin{tabular}{@{ }r@{ }rc@{ }c@{ }ccccccccc@{ }}
    \hline
        \multicolumn{5}{c}{Cubic Graphs} & \multicolumn{4}{c}{Upper bounds} & \multicolumn{4}{c}{Percentage optimality gap} \\  
$n$&	$m$	&$\nu(G)$	&$\alpha(G)$	& $\nu_T(G)$& Basic	& Clique	& Cycle-$2k3$ & All & Basic & Clique & Cycle-$2k3$ & All \\
\hline
50	& 75	& 25	& 22	& 34	& 35.00	& 34.92	& 34.41	& 34.33	& 2.94\%	& 2.70\%	& 1.20\% & 0.98\% \\
60	& 90	& 30	& 26	& 40	& 42.00	& 42.00	& 41.27	& 41.27	& 5.00\%	& 5.00\%	& 3.18\% & 3.18\% \\
70	& 105	& 35	& 30	& 47	& 49.00	& 48.95	& 48.28	& 48.25	& 4.26\%	& 4.15\%	& 2.71\% & 2.66\% \\
80	& 120	& 40	& 35	& 54	& 56.00	& 55.95	& 55.51	& 55.46	& 3.70\%	& 3.61\%	& 2.79\% & 2.70\% \\
90	& 135	& 45	& 39	& 61	& 63.00	& 62.78	& 62.29	& 62.05	& 3.28\%	& 2.91\%	& 2.12\% & 1.73\% \\
100	& 150	& 50	& 44	& 68	& 70.00	& 69.85	& 69.08	& 69.03	& 2.94\%	& 2.72\%	& 1.58\% & 1.52\% \\
\hline
    \end{tabular}
    \caption{Total Matching results for six cubic graphs: Comparison of upper bounds  and of the percentage optimality gaps.}
    \label{tab:cubic:single}
\end{table}

\begin{table}[t!]
    \centering
    \begin{tabular}{@{ }r@{ }rcccccccc@{ }}
    \hline
    \multicolumn{3}{c}{Cubic Graphs} & \multicolumn{3}{c}{Mean of violated cuts} & \multicolumn{4}{c}{Mean of percentage gap} \\  
$n$ &	$m$ &	density&	Clique &	Cycle-$2k3$	& All	& Basic & Clique	& Cycle-$2k3$	& All \\
\hline
50	&75	&6.1\%	&0.9	&24.9	&25.1	&3.88\%	&3.75\%	&2.14\% & 2.06\% \\
60	&90	&5.1\%	&1.3	&25.3	&27.4	&3.72\%	&3.54\%	&2.20\% & 2.10\% \\
70	&105	&4.3\%	&2.1	&27.3	&27.6	&4.04\%	&3.81\%	&2.79\% & 2.65\% \\
80	&120	&3.8\%	&1.8	&28.1	&29.9	&3.70\%	&3.52\%	&2.55\% & 2.44\% \\
90	&135	&3.4\%	&1.5	&30.1	&32.3	&3.45\%	&3.32\%	&2.21\% & 2.11\% \\
100	&150	&3.0\%	&1.8	&30.3	&33.0	&3.09\%	&2.95\%	&2.05\% & 1.96\% \\
\hline
    \end{tabular}
    \caption{Comparison of maximal clique inequality and congruent-$2k3$ cycle inequalities: Average results of the number of violated cuts and percentage optimality gap. Each row reports the average over 10 random instances of the same size.}
    \label{tab:cubic}
\end{table}

\subsection{Total Matchings Lower Bounds on Random Graphs}
We finally present the results of total matching problems comparing the LP relaxations based on different valid inequalities. 
We use random graphs with 80 vertices and an edge density ranging from 5\% up to 25\%. 
We do not report results for larger edge density because the main pattern on the cut strength does not change anymore.
Table~\ref{tab:random:single} reports in the first column the percentage graph density.
In the remaining columns, the table gives the numbers for $\nu(G)$, $\alpha(G)$, and $\nu_T(G)$. 
Regarding the upper bounds and the percentage optimality gap, in addition to the vertex-clique and the congruent-$2k3$ inequalities, we also consider the even-clique inequalities \eqref{even_clique_1} (column {\it Even-C}).
We notice that on very sparse graphs, the congruent-$2k3$ cycle inequalities close a large fraction of the optimality gap.
However, as soon as the graphs become denser, the vertex-clique inequalities play a crucial role in reducing the upper bounds (and hence reducing the percentage optimality gap).
On dense graphs, the even-clique inequalities reduce the upper bounds, but only marginally, and they are not as effective as the vertex-clique inequalities, despite being facet-defining.
%

Table~\ref{tab:random} reports more extensive results for random graphs: in each row, the table gives the average over 10 random graphs with the same density the number of violated cuts and the percentage optimality gap.
%
%
For very sparse graphs, combining several families of inequalities pays off in terms of upper bound.
However, on dense random graphs, it appears that only the vertex-clique inequalities are very effective in reducing the optimality gap.
For instance, on graphs with an edge density of 25\% (last row of Table \ref{tab:random}), by separating a mean of 131.1 cuts, we get a mean optimality gap of 6.3\%.
At the same time, we have an average of 831.7 violated congruent-$2k3$ cycle inequalities achieving an optimality gap of 25.22\%, and an average of 232.5 even-clique violated inequalities for a gap of 25.64\%.
Indeed, vertex-clique inequalities are the most effective in reducing the upper bounds.

\begin{table}[t!]
    \centering
    \begin{tabular}{@{ }c@{ }c@{ }c@{ }c@{ }cc@{ }c@{ }c@{ }cc@{ }c@{ }c@{ }c@{ }c@{ }}
    \hline
    \multicolumn{4}{c}{Random Graphs} & \multicolumn{5}{c}{Upper bounds} & \multicolumn{5}{c}{Percentage optimality gap} \\  
den	&	$\nu(G)$	&	$\alpha(G)$	&	$\nu_T(G)$	&	Basic	&	Clique	&	Cycle-$2k3$	&	Even-C	&	All	&	Basic & Clique	&	Cycle-$2k3$	&	Even-C	&	All	\\
\hline
5\%	&	39	&	40	&	58	&	58.90	&	58.89	&	58.79	&	58.90	&	58.78	&	1.54\%	&	1.54\%	&	1.37\%	&	1.54\%	&	1.34\%	\\
10\%	&	40	&	28	&	54	&	58.12	&	55.56	&	57.14	&	58.12	&	55.52	&	7.63\%	&	2.89\%	&	5.82\%	&	7.63\%	&	2.81\%	\\
15\%	&	40	&	21	&	50	&	58.95	&	53.23	&	57.76	&	58.90	&	53.23	&	17.90\%	&	6.47\%	&	15.51\%	&	17.80\%	&	6.47\%	\\
20\%	&	40	&	18	&	49	&	59.28	&	51.16	&	58.49	&	59.04	&	51.16	&	20.98\%	&	4.42\%	&	19.36\%	&	20.49\%	&	4.42\%	\\
25\%	&	40	&	16	&	48	&	59.41	&	49.97	&	58.78	&	59.02	&	49.97	&	23.78\%	&	4.11\%	&	22.46\%	&	22.96\%	&	4.11\%	\\
\hline
    \end{tabular}
    \caption{Total Matching results for five random graphs with 80 vertices and different edge density (column {\it den}): Comparison of upper bounds  and of percentage optimality gaps.}
    \label{tab:random:single}
\end{table}

\begin{table}[t!]
    \centering
    \begin{tabular}{rrrrrrrrrr}
\hline
\multicolumn{1}{c}{$\,$} & \multicolumn{4}{c}{Mean of violated cuts} & \multicolumn{5}{c}{Mean of percentage gap} \\
den	&	Clique	&	Cycle-$2k3$	&	Even-C	&	All	&	Basic	&	Clique	&	Cycle-$2k3$	&	Even-C	&	All	\\
\hline
5\%	&	2.3	&	14.8	&	0.0	&	12.1	&	1.08\%	&	0.86\%	&	0.86\%	&	1.08\%	&	0.86\%	\\
10\%	&	44.1	&	166.6	&	0.8	&	112.5	&	9.52\%	&	4.61\%	&	8.06\%	&	9.52\%	&	4.55\%	\\
15\%	&	95.5	&	590.8	&	15.9	&	137.9	&	17.93\%	&	5.74\%	&	15.56\%	&	17.68\%	&	5.74\%	\\
20\%	&	117.0	&	669.1	&	75.3	&	142.6	&	20.96\%	&	4.61\%	&	19.31\%	&	20.48\%	&	4.61\%	\\
25\%	&	131.1	&	831.7	&	232.5	&	153.0	&	26.48\%	&	6.30\%	&	25.22\%	&	25.64\%	&	6.30\%	\\
\hline
    \end{tabular}
    \caption{Comparison of the strength of valid inequalities for total matching: Average results of the number of violated cuts and percentage optimality gaps for random graphs with 80 vertices and different edge density. Each row reports the average over 10 random instances of the same size.}
    \label{tab:random}
\end{table}

\section{Conclusion and future works}
In this paper, we have proposed polyhedral approaches to the Total Coloring and Total Matching problems.
We have introduced two ILP models for the TCP: the assignment model and the set covering model based on maximal total matchings, and we have shown how to get the second model by applying a Dantzig-Wolfe reformulation to the first.
For the TMP, our main contributions include a partial characterization of the feasible region of the Total Matching Polytope and the introduction of two families of nontrivial valid inequalities: congruent-$2k3$ cycle inequalities that are facet-defining when $k=4$, and the facet-defining even-clique inequalities.
Curiously, we have also shown that the odd clique inequalities are valid, but they are not facet-defining.

As future work, we plan to give a complete description of the Total Matching Polytope for certain classes of graphs.
For instance, since we have a complete description of the Stable Set Polytope for bipartite graphs, our research direction will be to study new facet-defining inequalities that will completely describe the Total Matching Polytope for bipartite graphs,
and likely, for other known classes as quasi-line and claw-free graphs, \cite{quasi-line,Yuri}.
Computationally, it will be of interest, as future development, to implement a complete branch-and-price algorithm for the Total Coloring Problem, and a complete branch-and-cut algorithm for the TMP.

\section*{Acknowledgments}
The research was partially supported by the Italian Ministry of Education, University and Research (MIUR): Dipartimenti di Eccellenza Program (2018--2022) - Dept. of Mathematics ``F. Casorati'', University of Pavia.


%

%
%
%
%

\bibliographystyle{apalike}
\bibliography{references}

\end{document}